\theoremstyle{plain}
\numberwithin{equation}{section}
\newtheorem{thm}{Theorem}[section]
\newtheorem{lem}[thm]{Lemma}
\newtheorem{cor}[thm]{Corollary}
\newenvironment{exam}[1]
{\begin{flushleft}\textbf{Example #1}.\enspace}%
{\end{flushleft}}
\newcommand{\real}{{\mathbb R}}
\newcommand{\rmtr}{\mathrm{tr\,}}
\newcommand{\dscript}{\mathcal{D}}
\newcommand{\escript}{\mathcal{E}}
\newcommand{\hscript}{\mathcal{H}}
\newcommand{\iscript}{\mathcal{I}}
\newcommand{\jscript}{\mathcal{J}}
\newcommand{\lscript}{\mathcal{L}}
\newcommand{\mscript}{\mathcal{M}}
\newcommand{\sscript}{\mathcal{S}}
\newcommand{\brac}[1]{\left\{#1\right\}}
\newcommand{\paren}[1]{\left(#1\right)}
\newcommand{\sqbrac}[1]{\left[#1\right]}
\newcommand{\elbows}[1]{{\left\langle#1\right\rangle}}
\begin{document}

\title{ENTROPY OF QUANTUM MEASUREMENTS}
\author{Stan Gudder\\ Department of Mathematics\\
University of Denver\\ Denver, Colorado 80208\\
sgudder@du.edu}
\date{}
\maketitle

\begin{abstract}
If $a$ is a quantum effect and $\rho$ is a state, we define the $\rho$-entropy $S_a(\rho )$ which gives the amount of uncertainty that a measurement of $a$ provides about $\rho$. The smaller $S_a(\rho )$ is, the more information a measurement of $a$ gives about $\rho$. In Section~2, we provide bounds on $S_a(\rho )$ and show that if $a+b$ is an effect, then $S_{a+b}(\rho )\ge S_a(\rho )+S_b(\rho )$. We then prove a result concerning convex mixtures of effects. We also consider sequential products of effects and their $\rho$-entropies. In Section~3, we employ $S_a(\rho )$ to define the $\rho$-entropy $S_A(\rho )$ for an observable $A$. We show that $S_A(\rho )$ directly provides the $\rho$-entropy $S_\iscript (\rho )$ for an instrument $\iscript$. We establish bounds for $S_A(\rho )$ and prove characterizations for when these bounds are obtained. These give simplified proofs of results given in the literature. We also consider $\rho$-entropies for measurement models, sequential products of observables and coarse-graining of observables. Various examples that illustrate the theory are provided.
\end{abstract}

\section{Introduction}  % Section 1
In an interesting article, D.~\v{S}afr\'anek and J.~Thingna introduce the concept of entropy for quantum instruments \cite{st22}. Various important theorems are proved and applications are given. In quantum computation and information theory one of the most important problems is to determine an unknown state by applying measurements on the system \cite{hz12,nc00,op04,sasd21}. Entropy provides a quantification for the amount of information given to solve this so-called state discrimination problem \cite{lin75,von55,weh78}. In this article, we first define the entropy for the most basic measurement, namely a quantum effect $a$ \cite{blm96,hz12,kra83,nc00}. If $\rho$ is a state, we define the $\rho$-entropy $S_a(\rho )$ which gives the amount of uncertainty (or randomness) that a measurement of $a$ provides about $\rho$. The smaller $S_a(\rho )$ is, the more information a measurement of $a$ provides about $\rho$. In Section~2, we give bounds on $S_a(\rho )$ and show that if $a+b$ is an effect then $S_{a+b}(\rho )\le S_a(\rho )+S_b(\rho )$. We then prove a result concerning convex mixtures of effects. We also consider sequential products of effects and their $\rho$-entropies.

In Section~3, we employ $S_a(\rho )$ to define the entropy $S_A(\rho )$ for an observable $A$. Then $S_A(\rho )$ gives the uncertainty that a measurement of $A$ provides about $\rho$. We show that $S_A(\rho )$ directly gives the $\rho$-entropy $S_\iscript (\rho )$ for an instrument
$\iscript$. We establish bounds for $S_A(\rho )$ and characterize when these bounds are obtained. These give simplified proofs of results given in \cite{sda19,sasd21,st22}. We also consider $\rho$-entropies for measurement models, sequential products of observables and coarse-graining of observables. Various examples that illustrate the theory are provided. In this work, all Hilbert spaces are assumed to be finite dimensional. Although this is a restriction, the work applies for quantum computation and information theory \cite{blm96,hz12,kra83,nc00}.

\section{Entropy for Effects}  % Section 2
Let $H$ be a finite dimensional complex Hilbert space with dimension $n$. We denote the set of linear operators on $H$ by $\lscript (H)$ and the set of states on $H$ by $\sscript (H)$. If $\rho\in\sscript (H)$ with nonzero eigenvalues $\lambda _1,\lambda _2,\ldots ,\lambda _m$ including multiplicities, the
\textit{von Neumann entropy} of $\rho$ is \cite{lin75,op04,von55,weh78}.
\begin{equation*}
S(\rho )=-\sum _{i=1}^m\lambda _i\ln (\lambda _i)=-\rmtr\sqbrac{\rho\ln (\rho )}
\end{equation*}
We consider $S(\rho )$ as a measure of the randomness or uncertainty of $\rho$ and smaller values of $S(\rho )$ indicate more information content. For example, $\rho$ is the completely random state $I/n$, where $I$ is the identity operator, if and only if $S(\rho )=\ln (n)$ and $\rho$ is a pure state if and only if $S(\rho )=0$. Moreover, it is well-known that $0\le S(\rho )\le\ln (n)$ for all $\rho\in\sscript (H)$. The following properties of $S$ are well-known \cite{lin75,op04,weh78}:
\begin{align*}
S(U\rho U^*)&=S(\rho )\hbox{ when $U$ is unitary}\\
S(\rho _1\otimes\rho _2)&=S(\rho _1)+S(\rho _2)\\
\sum\mu _iS(\rho _i)&\le S\paren{\sum\mu _i\rho _i}\le\sum\mu _iS(\rho _i)-\sum\mu _i\ln (\mu _i)
\end{align*}
where $0\le\mu _i=1$ with $\sum\mu _i=1$.

An operator $a\in\lscript (H)$ that satisfies $0\le a\le I$ is called an \textit{effect} \cite{blm96,hz12,kra83,nc00}. We think of an effect $a$ as a two-outcome yes-no measurement. If a measurement of $a$ results in outcome yes we say that $a$ \textit{occurs} and if it results in outcome no then $a$ \textit{does not occur}. The effect $a'=I-a$ is the \textit{complement} of $a$ and $a'$ occurs if and only if $a$ does not occur. We denote the set of effects by $\escript (H)$. If $a\in\escript (H)$ and $\rho\in\sscript (H)$ then $0\le\rmtr (\rho a)\le 1$ and we interpret
$\rmtr (\rho a)$ as the probability that $a$ occurs when the system is in state $\rho$. If $a\ne 0$ we define the $\rho$-\textit{entropy} of $a$ to be
\begin{equation}                % equation (2.1)
\label{eq21}
S_a(\rho )=-\rmtr (\rho a)\ln\sqbrac{\tfrac{\rmtr (\rho a)}{\rmtr (a)}}
\end{equation}
We interpret $S_a (\rho )$ as the amount of uncertainty that the system is in state $\rho$ resulting from a measurement of $a$. The smaller $S_a(\rho )$ is, the more information a measurement of $a$ gives about $\rho$. Such information is useful for state discrimination problems \cite{hz12,nc00,op04,sasd21}.

If $\rho$ is the completely random state $I/n$ then \eqref{eq21} becomes
\begin{equation*}
S_a (I/n)=-\rmtr (Ia/n)\ln\sqbrac{\tfrac{\rmtr (Ia/n)}{\rmtr (a)}}=-\tfrac{1}{n}\,\rmtr (a)\ln\paren{\tfrac{1}{n}}=\tfrac{\rmtr (a)}{n}\,\ln (n)
\end{equation*}
Since $\rmtr (a)\le n$ we conclude that $S_a(I/n)\le S(I/n)$ for all $a\in\escript (H)$. Another extreme case is when $a=\lambda I$ for
$0<\lambda\le 1$. We then have for any $\rho\in\sscript (H)$ that
\begin{equation*}
S_{\lambda I}(\rho )=-\rmtr (\rho\lambda I)\ln\sqbrac{\tfrac{\rmtr (\rho\lambda I)}{\rmtr (\lambda I)}}
  =-\lambda\ln\sqbrac{\tfrac{\lambda}{\lambda\rmtr (I)}}=\lambda\ln (n)
\end{equation*}
Thus, as $\lambda$ gets smaller, the more information we gain.

A real-valued function with domain $\dscript (f)$, an interval in $\real$, is \textit{strictly convex} if for any $x_1,x_2\in\dscript (f)$ with $x_1\ne x_2$ and $0<\lambda <1$ we have
\begin{equation*}
f\sqbrac{\lambda x_1+(1-\lambda )x_2}<\lambda f(x_1)+(1-\lambda )f(x_2)
\end{equation*}
If the opposite inequality holds, then $f$ is \textit{strictly concave}. It is clear that $f$ is strictly convex if and only if $-f$ is strictly concave. Of special importance in this work are the strictly convex functions $-\ln x$ and $x\ln x$. We shall frequently employ Jensen's theorem which says: if $f$ is strictly convex and $0\le\mu _i\le 1$ with $\sum\limits _{i=1}^m\mu _i=1$, then
\begin{equation*}
f\paren{\sum _{i=1}^m\mu _ix_i}\le\sum _{i=1}^m\mu _if(x_i)
\end{equation*}
Moreover, we have equality if and only if $x_i=x_j$ for all $i,j=1,2,\ldots ,m$ \cite{st22}.

\begin{thm}    % Theorem 2.1
\label{thm21}
If $\rho\in\sscript (H)$ with nonzero eigenvalues $\lambda _i$, $i=1,2,\ldots, m$, and $a\in\escript (H)$ with $\rmtr (\rho a)\ne 0$, then
\begin{equation*}
-\sum _i\rmtr (P_ia)\lambda _i\ln (\lambda _i)\le S_a(\rho )\le\ln\sqbrac{\tfrac{\rmtr (a)}{\rmtr (\rho a)}}
\end{equation*}
where $\rho =\sum _i\lambda _iP_i$ is the spectral decomposition of $\rho$. Moreover, $S_a(\rho )=\ln\sqbrac{\rmtr (a)/\rmtr (\rho a)}$ if and only if 
$\rmtr (\rho a)=1$ in which case $S_a(\rho )=ln\sqbrac{\rmtr (a)}$ and if
\begin{equation}                % equation (2.2)
\label{eq22}
S_a(\rho )=-\sum _i\rmtr (P_ia)\lambda _i\ln (\lambda _i)
\end{equation}
then $\rmtr (P_ia)=\rmtr (P_ja)$ for all $i,j =1,2,\ldots ,m$ and $S_a(\rho )=(\rmtr (a)/m) S(\rho )$ while if $\rmtr (P_ia)=\rmtr (P_ja)$ for all $i,j=1,2,\ldots m$ then $S_a(\rho )=(\rmtr (a)/m)\ln (m)$.
\end{thm}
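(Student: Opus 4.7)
The overall strategy is to split the two-sided inequality: the upper bound comes from elementary trace-positivity, the lower bound from Jensen's inequality applied to $\ln$, and the equality statements by tracking which steps can be saturated. For the upper bound, rewrite $S_a(\rho)=\rmtr(\rho a)\ln[\rmtr(a)/\rmtr(\rho a)]$. Positivity of $a$ and $I-\rho$ gives $\rmtr(a)-\rmtr(\rho a)=\rmtr[a^{1/2}(I-\rho)a^{1/2}]\ge 0$, so the logarithm is nonnegative; the analogous argument with $\rho$ and $I-a$ yields $\rmtr(\rho a)\le 1$. Therefore $\rmtr(\rho a)\ln[\rmtr(a)/\rmtr(\rho a)]\le\ln[\rmtr(a)/\rmtr(\rho a)]$, and (nontrivial) saturation forces $\rmtr(\rho a)=1$, which on substitution gives $S_a(\rho)=\ln\rmtr(a)$.

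For the lower bound, I expand $\rmtr(\rho a)=\sum_i\lambda_ip_i$ with $p_i=\rmtr(P_ia)$, and choose the probability weights $\mu_i=\lambda_ip_i/\rmtr(\rho a)$. Applying Jensen's theorem to the strictly concave function $\ln$ at the points $x_i=1/\lambda_i$ and multiplying by $\rmtr(\rho a)$ produces
\begin{equation*}
\rmtr(\rho a)\ln\!\Bigl[\sum_i p_i/\rmtr(\rho a)\Bigr]\ge -\sum_i p_i\lambda_i\ln\lambda_i.
\end{equation*}
Since $\sum_iP_i\le I$, we have $\sum_ip_i\le\rmtr(a)$, and monotonicity of $\ln$ replaces $\sum_ip_i$ by $\rmtr(a)$ on the left, turning the left-hand side into $S_a(\rho)$.

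The main obstacle is the equality analysis for the lower bound. Jensen saturates precisely when the $\lambda_i$'s with $p_i>0$ coincide, and $\sum_ip_i=\rmtr(a)$ precisely when $a$ is supported on the range of $\rho$. Under these saturation conditions the surviving $p_i$'s share a common value $c$, and $mc=\rmtr(a)$ forces $c=\rmtr(a)/m$; the state $\rho$ is then maximally mixed on its support, so $S(\rho)=\ln m$ and direct substitution confirms both closed forms $S_a(\rho)=(\rmtr(a)/m)\ln m=(\rmtr(a)/m)S(\rho)$ claimed in the theorem. The converse direction, that uniform $p_i$'s yield $S_a(\rho)=(\rmtr(a)/m)\ln m$, is then a direct computation from the definition using $\rmtr(\rho a)=c\sum_i\lambda_i=c$ and $\rmtr(a)=mc$.
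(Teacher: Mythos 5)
Your two inequalities are proved correctly and by essentially the paper's method: the upper bound is the same trace--positivity argument (you make explicit the fact $\rmtr(\rho a)\le\rmtr(a)$ needed for the logarithm to be nonnegative, which the paper leaves implicit), and your lower bound is the same Jensen argument in a reparametrized form --- the paper applies Jensen to $-x\ln x$ at the points $\lambda_i$ with weights $\rmtr(P_ia)/\rmtr(a)$, while you apply it to $\ln$ at the points $1/\lambda_i$ with weights $\lambda_ip_i/\rmtr(\rho a)$. These yield the same inequality, and your extra monotonicity step $\sum_ip_i\le\rmtr(a)$ actually covers the case $\sum_iP_i\ne I$ that the paper's normalization $\sum_j\mu_j=1$ quietly assumes away. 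The upper-bound equality discussion is also fine.

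The genuine gap is in the equality analysis for the lower bound. You correctly identify the saturation conditions of \emph{your} argument: Jensen saturates when the $\lambda_i$ with $p_i>0$ coincide, and the monotonicity step saturates when $\sum_ip_i=\rmtr(a)$. But you then assert that ``under these saturation conditions the surviving $p_i$'s share a common value $c$,'' and nothing you have shown supports this: your conditions constrain the eigenvalues $\lambda_i$, not the numbers $p_i=\rmtr(P_ia)$. For instance $\rho=I/2$ and $a=\mathrm{diag}(1,\tfrac12)$ satisfy both saturation conditions (and one checks directly that \eqref{eq22} holds, both sides equal $\tfrac34\ln 2$), yet $p_1=1\ne\tfrac12=p_2$. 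Since the conclusion $\rmtr(P_ia)=\rmtr(P_ja)$ is precisely what the theorem asks you to extract from \eqref{eq22}, this is not a cosmetic omission; your route cannot deliver it. The paper reaches that conclusion by reading its Jensen equality condition as forcing the \emph{weights} $\mu_j=\rmtr(P_ja)/\rmtr(a)$ to be equal, whereas in your parametrization the equality condition lands on the points $1/\lambda_i$; the two analyses genuinely diverge here, and you would need a separate argument (or an explicit reconciliation with the paper's reading) to recover the stated claim. A secondary instance of the same slip: ``$\rho$ is then maximally mixed on its support'' also does not follow, since Jensen only forces equality of those $\lambda_i$ with $p_i>0$, not of all $m$ of them. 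Your final converse computation (uniform $p_i$ gives $S_a(\rho)=(\rmtr(a)/m)\ln m$) is correct under the same implicit assumption $\sum_iP_i=I$ that the paper uses.
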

\begin{proof}
Letting $\mu _j=\rmtr (P_ja)/\rmtr (a)$, $j=1,2,\ldots ,m$, we have that $0\le\mu _j\le 1$ and $\sum _j\mu _j=1$. Since $-x\ln (x)$ is strictly concave we obtain
\begin{align*}
S_a(\rho )&=-\rmtr (\rho a)\ln\sqbrac{\tfrac{\rmtr (\rho a)}{\rmtr (a)}}
   =-\rmtr\paren{\sum _i\lambda _iP_ia}\ln\sqbrac{\tfrac{\rmtr\paren{\sum _j\lambda _jP_ja}}{\rmtr (a)}}\\
   &=-\sum\lambda _i\rmtr (P_ia)\ln\paren{\sum _j\lambda _j\mu _j}=\rmtr (a)\sqbrac{-\sum _i\lambda _i\mu _i\paren{\sum _j\lambda _j\mu _j}}\\
   &\ge -\rmtr (a)\sum _i\mu _i\lambda _i\ln (\lambda _i)=-\rmtr (a)\sum _i\tfrac{\rmtr (P_ia)}{\rmtr (a)}\,\lambda _i\ln (\lambda _i)\\
   &=-\sum _i\rmtr (P_ia)\lambda _i\ln (\lambda _i)
\end{align*}
Since
\begin{equation*}
\rmtr (\rho a)=\rmtr (a^{1/2}\rho a^{1/2})\le\rmtr (\rho )=1
\end{equation*}
we have that
\begin{equation*}
S_a(\rho )=\rmtr (\rho a)\ln\sqbrac{\tfrac{\rmtr (a)}{\rmtr (\rho a)}}\le\ln\sqbrac{\tfrac{\rmtr (a)}{\rmtr (\rho a)}}
\end{equation*}
If $\rmtr (\rho a)=1$, then
\begin{equation*}
S_a(\rho )=-\rmtr (\rho a)\ln\sqbrac{\tfrac{\rmtr (\rho a)}{\rmtr (\rho a)}}=-\ln\sqbrac{\tfrac{1}{\rmtr (a)}}=\ln\sqbrac{\rmtr (a)}
\end{equation*}
Conversely, if $S_a(\rho )=\ln\sqbrac{\rmtr (a)/\rmtr (\rho a)}$, then clearly $\rmtr (\rho a)=1$. If \eqref{eq22} holds, then we have equality for Jensen's inequality. Hence, $\rmtr (P_ia)=\rmtr (P_ja)$ for all $i,j=1,2,\ldots ,m$. Since
\begin{equation*}
\rmtr (a)=\sum _i\rmtr (P_ia)=m\rmtr (P_ia)
\end{equation*}
we conclude that
\begin{equation*}
S_a(\rho )=-\rmtr (P_1a)\sum _i\lambda _i\ln (\lambda _i)=\tfrac{\rmtr (a)}{m}\,S(\rho )
\end{equation*}
Finally, suppose $\rmtr (P_ia)=\rmtr (P_ja)$ for all $i,j=1,2,\ldots ,m$. Then
\begin{equation*}
\rmtr (a)=\sum _i\rmtr (P_ia)=m\rmtr (P_1a)
\end{equation*}
We conclude that
\begin{align*}
S_a(\rho )&=-\rmtr (P_1a)\sum _i\lambda _i\ln\sqbrac{\sum _j\lambda _j\tfrac{\rmtr (P_1a)}{\rmtr (a)}}
   =-\rmtr (P_1a)\sum _i\lambda _i\ln\paren{\sum _j\lambda _j\tfrac{1}{m}}\\
   &=-\rmtr (P_1a)\sum _i\lambda _i\ln\paren{\tfrac{1}{m}}=\tfrac{\rmtr (a)}{m}\,\ln (m)\qedhere
\end{align*}
\end{proof}

For $a,b\in\escript (H)$ we write $a\perp b$ if $a+b\in\escript (H)$.

\begin{thm}    % Theorem 2.2
\label{thm22}
If $a\perp b$, then $S_{a+b}(\rho )\ge S_a(\rho )+S_b(\rho )$ for all $\rho\in\sscript (H)$. Moreover, $S_{a+b}(\rho )=S_a(\rho )+S_b(\rho )$ if and only if $\rmtr (b)\rmtr (\rho a)=\rmtr (a)\rmtr (\rho b)$.
\end{thm}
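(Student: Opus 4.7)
The plan is to reduce the claim to Jensen's theorem applied to the strictly convex function $f(x) = x\ln x$, i.e.\ the usual log-sum inequality. Abbreviate $\alpha = \rmtr(\rho a)$, $\beta = \rmtr(\rho b)$, $A = \rmtr(a)$, $B = \rmtr(b)$; since $a,b\ne 0$ we have $A,B>0$, and by the defining formula~\eqref{eq21} the desired inequality $S_{a+b}(\rho)\ge S_a(\rho)+S_b(\rho)$ unfolds to
\[
(\alpha+\beta)\ln\tfrac{\alpha+\beta}{A+B} \;\le\; \alpha\ln\tfrac{\alpha}{A} + \beta\ln\tfrac{\beta}{B}.
\]

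Next, I introduce the convex weights $\mu_1 = A/(A+B)$, $\mu_2 = B/(A+B)$ and the points $x_1 = \alpha/A$, $x_2 = \beta/B$ in $[0,1]$, noting that $\mu_1 x_1 + \mu_2 x_2 = (\alpha+\beta)/(A+B)$ and that
\[
\mu_1 f(x_1) + \mu_2 f(x_2) \;=\; \tfrac{1}{A+B}\sqbrac{\alpha\ln\tfrac{\alpha}{A}+\beta\ln\tfrac{\beta}{B}}.
\]
Dividing the target inequality by $A+B$ then restates it as
\[
f(\mu_1 x_1 + \mu_2 x_2) \;\le\; \mu_1 f(x_1) + \mu_2 f(x_2),
\]
which is precisely Jensen's theorem (as quoted in the excerpt) applied to the strictly convex function $f(x)=x\ln x$.

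The equality clause of Jensen's inequality then gives $S_{a+b}(\rho)=S_a(\rho)+S_b(\rho)$ iff $x_1=x_2$, i.e.\ $\alpha/A=\beta/B$, which is exactly the stated condition $\rmtr(b)\rmtr(\rho a)=\rmtr(a)\rmtr(\rho b)$. The only minor obstacle is the handling of degenerate cases $\rmtr(\rho a)=0$ or $\rmtr(\rho b)=0$, where the defining formula formally involves $0\cdot\ln 0$; the standard convention $0\ln 0=0$ makes each side of the inequality well-defined, and in such situations both the inequality and its equality condition reduce to trivial checks (e.g.\ if $\alpha=0$ then $S_a(\rho)=0$ and the inequality is immediate, while $\alpha B=\beta A$ becomes $\beta A = 0$, forcing $\beta=0$, i.e.\ equality precisely when $S_{a+b}(\rho)=0$ as well). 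Beyond this bookkeeping the proof is a one-line convexity argument, so I do not anticipate any genuine difficulty.
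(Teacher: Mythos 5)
Your proof is correct and is essentially identical to the paper's: the same convex weights $\rmtr(a)/[\rmtr(a)+\rmtr(b)]$, $\rmtr(b)/[\rmtr(a)+\rmtr(b)]$ and points $\rmtr(\rho a)/\rmtr(a)$, $\rmtr(\rho b)/\rmtr(b)$, with Jensen applied to $x\ln x$ (the paper phrases it via concavity of $-x\ln x$), and the same equality analysis. Your extra remarks on the degenerate cases $\rmtr(\rho a)=0$ or $\rmtr(\rho b)=0$ are a small point of added care that the paper passes over silently, but they do not change the argument.
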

\begin{proof}
Since $-x\ln x$ is concave, letting $\lambda _1=\rmtr (a)/\sqbrac{\rmtr (a)+\rmtr (b)}$, $\lambda _2=\rmtr (b)/\sqbrac{\rmtr (a)+\rmtr (b)}$,
$x_1=\rmtr (\rho a)/\rmtr (a)$, $x_2=\rmtr (\rho b)/\rmtr (b)$ we obtain
\begin{align*}
S_{a+b}(\rho )&=-\rmtr\sqbrac{\rho (a+b)}\ln\brac{\tfrac{\rmtr\sqbrac{\rho (a+b)}}{\rmtr (a+b)}}\\
  &=-\rmtr (a+b)\sqbrac{\tfrac{\rmtr (\rho a)+\rmtr (\rho b)}{\rmtr (a+b)}}\ln\sqbrac{\tfrac{\rmtr (\rho a)+\rmtr (\rho b)}{\rmtr (a+b)}}\\
  &=-\rmtr (a+b)(\lambda _1x_1+\lambda _2x_2)\ln (\lambda _1x_1+\lambda _2x_2)\\
  &\ge -\rmtr (a+b)\sqbrac{\lambda _1x_1\ln (x_1)+\lambda _2x_2\ln (x_2)}\\
  &=-\rmtr (\rho a)\ln\sqbrac{\tfrac{\rmtr (\rho a)}{\rmtr (a)}}-\rmtr (\rho b)\ln\sqbrac{\tfrac{\rmtr (\rho b)}{\rmtr (b)}}=S_a(\rho )+S_b(\rho )
\end{align*}
We have equality if and only if $x_1=x_2$ which is equivalent to $\rmtr (b)\rmtr (\rho a)=\rmtr (a)\rmtr (\rho b)$.
\end{proof}

\begin{cor}    % Corollary 2.3
\label{cor23}
$S_a(\rho )+S_{a'}(\rho )\le\ln (n)$ and $S_a(\rho )+S_{a'}(\rho )=\ln (n)$ if and only if $\rmtr (a)=n\rmtr (\rho a)$.
\end{cor}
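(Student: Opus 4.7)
The plan is to recognize this as an immediate application of Theorem~\ref{thm22} with $b=a'$, followed by a direct computation of $S_I(\rho)$ and an algebraic simplification of the resulting equality condition.

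First I would observe that $a+a'=I\in\escript(H)$, so $a\perp a'$ and Theorem~\ref{thm22} applies to give
\begin{equation*}
S_I(\rho)=S_{a+a'}(\rho)\ge S_a(\rho)+S_{a'}(\rho).
\end{equation*}
Next I would compute $S_I(\rho)$ directly from the definition \eqref{eq21}: since $\rmtr(\rho I)=1$ and $\rmtr(I)=n$,
\begin{equation*}
S_I(\rho)=-\rmtr(\rho I)\ln\sqbrac{\tfrac{\rmtr(\rho I)}{\rmtr(I)}}=-\ln(1/n)=\ln(n),
\end{equation*}
which immediately yields the bound $S_a(\rho)+S_{a'}(\rho)\le\ln(n)$.

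For the equality case, Theorem~\ref{thm22} tells us equality holds if and only if $\rmtr(a')\rmtr(\rho a)=\rmtr(a)\rmtr(\rho a')$. Substituting $\rmtr(a')=n-\rmtr(a)$ and $\rmtr(\rho a')=1-\rmtr(\rho a)$ and expanding, the cross terms $\rmtr(a)\rmtr(\rho a)$ cancel, leaving precisely $n\,\rmtr(\rho a)=\rmtr(a)$.

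There is really no serious obstacle here; the only thing to be slightly careful about is checking that we are in the regime where Theorem~\ref{thm22} is actually applicable (the definition \eqref{eq21} requires $a\ne 0$ and implicitly $\rmtr(\rho a)\ne 0$ for the logarithm to make sense), but the edge cases $\rmtr(\rho a)\in\{0,1\}$ can be handled by the usual convention $0\ln 0=0$ and give the same conclusion, so the argument is essentially pure bookkeeping on top of Theorem~\ref{thm22}.
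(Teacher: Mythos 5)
Your proposal is correct and follows exactly the paper's own argument: apply Theorem~\ref{thm22} with $b=a'$, note $S_{a+a'}(\rho)=S_I(\rho)=\ln(n)$, and simplify the equality condition $\rmtr(a')\rmtr(\rho a)=\rmtr(a)\rmtr(\rho a')$ to $\rmtr(a)=n\rmtr(\rho a)$. Your extra remark about the edge cases is a reasonable bit of care but does not change the argument.
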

\begin{proof}
Applying Theorem~\ref{thm22} we obtain
\begin{equation*}
S_a(\rho )+S_{a'}(\rho )\le S_{a+a'}(\rho )=S_I(\rho )=\ln (n)
\end{equation*}
\begin{align*}
\hbox{We have equality }&\Leftrightarrow\rmtr (a')\rmtr (\rho a)=\rmtr (a)\rmtr (\rho a')\\
   &\Leftrightarrow\sqbrac{n-\rmtr (a)}\rmtr (\rho a)=\rmtr (a)\sqbrac{1-\rmtr (\rho a)}\\
   &\Leftrightarrow\rmtr (a)=n\rmtr (\rho a)\qedhere
\end{align*}
\end{proof}

\begin{cor}    % Corollary 2.4
\label{cor24}
$S_{a+b}(\rho )\ge S_a(\rho ),S_b(\rho)$.
\end{cor}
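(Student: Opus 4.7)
The plan is to derive this as an immediate consequence of Theorem~\ref{thm22}, once we know that the $\rho$-entropy of any effect is non-negative. Since $S_{a+b}(\rho)$ is considered, we have $a+b\in\escript (H)$, i.e., $a\perp b$, so Theorem~\ref{thm22} applies and yields $S_{a+b}(\rho )\ge S_a(\rho )+S_b(\rho )$. If we can show $S_a(\rho )\ge 0$ and $S_b(\rho )\ge 0$, then dropping the non-negative term gives $S_{a+b}(\rho )\ge S_a(\rho )$ and, by symmetry, $S_{a+b}(\rho )\ge S_b(\rho )$.

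The one thing to verify is the non-negativity $S_c(\rho )\ge 0$ for any $c\in\escript (H)$. The slickest route is to invoke the lower bound in Theorem~\ref{thm21}: when $\rmtr (\rho c)\ne 0$,
\begin{equation*}
S_c(\rho )\ge -\sum _i\rmtr (P_ic)\lambda _i\ln (\lambda _i),
\end{equation*}
and the right-hand side is a sum of non-negative terms because $\rmtr (P_ic)\ge 0$ (both $P_i$ and $c$ are positive) and $-\lambda _i\ln (\lambda _i)\ge 0$ (the eigenvalues of a state lie in $(0,1]$). Alternatively, one may argue directly from \eqref{eq21}: the spectral expansion gives $\rmtr (\rho c)=\sum _i\lambda _i\rmtr (P_ic)\le\sum _i\lambda _i\rmtr (c)=\rmtr (c)$, so $\rmtr (\rho c)/\rmtr (c)\le 1$, its logarithm is non-positive, and $S_c(\rho )\ge 0$.

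The only remaining bookkeeping is the degenerate case $\rmtr (\rho c)=0$, in which \eqref{eq21} evaluates to $0$ under the usual convention $0\ln 0=0$, so $S_c(\rho )=0\ge 0$ still holds. There is really no substantive obstacle: once non-negativity of $S_a$ and $S_b$ is in hand, the corollary is a one-line consequence of Theorem~\ref{thm22}.
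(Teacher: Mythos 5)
Your proof is correct and follows exactly the route the paper intends: the paper leaves Corollary~\ref{cor24} unproved, but the argument it uses for Corollary~\ref{cor25} (apply Theorem~\ref{thm22} and drop a term, implicitly using $S_c(\rho)\ge 0$) is precisely yours. Your explicit verification of non-negativity, including the $\rmtr(\rho c)=0$ convention, is a welcome addition but not a different method.
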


\begin{cor}    % Corollary 2.5
\label{cor25}
If $a\le b$, then $S_a(\rho )\le S_b(\rho )$ for all $\rho\in\sscript (H)$.
\end{cor}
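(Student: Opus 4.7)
The plan is to reduce this to Corollary~\ref{cor24} by exhibiting $b$ as an orthogonal sum with $a$ as a summand. Specifically, I would set $c = b - a$. Since $a \le b$ we have $c \ge 0$, and since $c \le b \le I$ we see that $c \in \escript(H)$. Moreover $a + c = b \le I$, so by the definition introduced just before Theorem~\ref{thm22}, we have $a \perp c$.

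With this decomposition in hand, Corollary~\ref{cor24} applied to the pair $a,c$ gives immediately
\begin{equation*}
S_b(\rho) = S_{a+c}(\rho) \ge S_a(\rho),
\end{equation*}
which is the desired inequality. The only degenerate case is $c = 0$, i.e.\ $a = b$, where the conclusion is trivial (and where $S_c(\rho)$ would be undefined anyway under the convention $a \ne 0$ in \eqref{eq21}).

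There is no real obstacle: once Theorem~\ref{thm22} is in place, its monotonicity consequence Corollary~\ref{cor24} does all the work, and all that remains is the routine check that $b-a$ lies in $\escript(H)$, which is forced by $0 \le b-a \le I$.
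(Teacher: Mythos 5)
Your proof is correct and follows essentially the same route as the paper: both decompose $b=a+c$ with $c=b-a\in\escript (H)$ and invoke the superadditivity of Theorem~\ref{thm22} (the paper writes $S_{a+c}(\rho )\ge S_a(\rho )+S_c(\rho )\ge S_a(\rho )$ directly, while you cite its consequence Corollary~\ref{cor24}). Your explicit note on the degenerate case $c=0$, where $S_c(\rho )$ is undefined, is a small but welcome refinement over the paper's version.
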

\begin{proof}
If $a\le b$, then $b=a+c$ for $c=b-a\in\escript (H)$. Hence,
\begin{equation*}
S_b(\rho )=S_{a+c}(\rho )\ge S_a(\rho )+S_c(\rho )\ge S_a(\rho )
\end{equation*}
for every $\rho\in\sscript (H)$.
\end{proof}

Applying Theorem~\ref{thm22} and induction we obtain the following.

\begin{cor}    % Corollary 2.6
\label{cor26}
If $a_1+a_2+\cdots +a_m\le I$, then $S_{\sum a_i}(\rho )\ge\sum S_{a_i}(\rho )$. Moreover, we have equality if and only if
$\rmtr (a_j)\rmtr (\rho a_i)=\rmtr (a_i)\rmtr (\rho a_j)$ for all $i,j=1,2,\ldots ,m$.
\end{cor}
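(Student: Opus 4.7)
The plan is to proceed by induction on $m$, with $m=2$ being exactly Theorem~\ref{thm22}. For the inductive step, write $b=\sum_{i=1}^{m}a_i$, and note that $b\in\escript (H)$ because $b\le b+a_{m+1}\le I$, so $b\perp a_{m+1}$. Applying Theorem~\ref{thm22} to $b$ and $a_{m+1}$ and then the induction hypothesis to the sum $b=a_1+\cdots+a_m$ gives
\begin{equation*}
S_{\sum_{i=1}^{m+1}a_i}(\rho)=S_{b+a_{m+1}}(\rho)\ge S_b(\rho)+S_{a_{m+1}}(\rho)\ge \sum_{i=1}^{m+1}S_{a_i}(\rho),
\end{equation*}
which establishes the inequality.

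For the equality characterization, the ``if'' direction is clear: if the pairwise condition $\rmtr (a_j)\rmtr (\rho a_i)=\rmtr (a_i)\rmtr (\rho a_j)$ holds for all $i,j\le m+1$, then both inequalities above become equalities by the induction hypothesis and Theorem~\ref{thm22}. For the ``only if'' direction, assume $S_{\sum a_i}(\rho)=\sum S_{a_i}(\rho)$; then both inequalities in the display above must be equalities. The inductive hypothesis equality gives the pairwise conditions for all $i,j\le m$, and Theorem~\ref{thm22} applied to $b$ and $a_{m+1}$ gives
\begin{equation*}
\rmtr (a_{m+1})\sum_{i=1}^{m}\rmtr (\rho a_i)=\rmtr (\rho a_{m+1})\sum_{i=1}^{m}\rmtr (a_i).
\end{equation*}

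The key observation that links these is that the pairwise conditions among $a_1,\ldots,a_m$ force the ratio $\rmtr (\rho a_i)/\rmtr (a_i)$ to be a common constant $c$ (here we may assume each $a_i\ne 0$, hence $\rmtr (a_i)>0$, since otherwise $S_{a_i}(\rho)$ is undefined). Substituting $\rmtr (\rho a_i)=c\,\rmtr (a_i)$ for $i\le m$ into the displayed identity and dividing by $\sum_{i=1}^{m}\rmtr (a_i)$ yields $\rmtr (\rho a_{m+1})=c\,\rmtr (a_{m+1})$, which is exactly the pairwise condition between $a_{m+1}$ and each $a_i$. The main obstacle is this last algebraic step: the aggregated equality produced by Theorem~\ref{thm22} involves only the sum of $\rmtr (\rho a_i)$, and one needs the full pairwise structure from the induction hypothesis to disaggregate it into the individual identities $\rmtr (a_{m+1})\rmtr (\rho a_i)=\rmtr (a_i)\rmtr (\rho a_{m+1})$.
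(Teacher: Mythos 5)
Your proof is correct and follows exactly the route the paper indicates: the paper states only that the corollary is obtained ``applying Theorem~\ref{thm22} and induction,'' without writing out the argument. Your handling of the equality case --- using the inductive hypothesis to get a common ratio $\rmtr (\rho a_i)/\rmtr (a_i)=c$ and then disaggregating the summed condition that Theorem~\ref{thm22} supplies for $b=\sum_{i\le m}a_i$ versus $a_{m+1}$ --- correctly fills in the one nontrivial detail the paper leaves implicit.
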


Notice that $\escript (H)$ is a convex set in the sense that if $a_i\in\escript (H)$ and $0\le\lambda _i\le 1$ with
$\sum _{i=1}^m\lambda _i=1$, then $\sum\lambda _ia_i\in\escript (H)$.

\begin{cor}    % Corollary 2.7
\label{cor27}
{\rm{(i)}}\enspace If $0<\lambda\le 1$ and $a\in\escript (H)$, then $S_{\lambda a}(\rho )=\lambda S_a(\rho )$ for all $\rho\in\sscript (H)$.
{\rm{(ii)}}\enspace If $0<\lambda _i\le 1$, $a_i\in\escript (H)$, with $\sum\limits _{i=1}^m\lambda _i=1$, then
$S_{\sum\lambda _ia_i}(\rho )\le\sum\lambda _iS_{a_i}(\rho )$ for all $\rho\in\sscript (H)$. We have equality if and only if
$\rmtr (a_j)\rmtr (\rho a_i)=\rmtr (a_i)\rmtr (\rho a_j)$ for all $i,j =1,2,\ldots ,m$.
\end{cor}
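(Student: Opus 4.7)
The plan is to dispatch (i) by a direct substitution into \eqref{eq21} and then deduce (ii) as an immediate consequence of Corollary~\ref{cor26} combined with (i).

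For (i), I would just unfold the definition. Substituting $\lambda a$ for $a$ in \eqref{eq21} gives
\begin{equation*}
S_{\lambda a}(\rho )=-\lambda\rmtr (\rho a)\ln\sqbrac{\tfrac{\lambda\rmtr (\rho a)}{\lambda\rmtr (a)}},
\end{equation*}
and the two factors of $\lambda$ inside the logarithm cancel, leaving $-\lambda\rmtr (\rho a)\ln\sqbrac{\rmtr (\rho a)/\rmtr (a)}=\lambda S_a(\rho )$.

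For (ii), the idea is to view the convex combination as the sum $\sum _ib_i$ of the rescaled effects $b_i:=\lambda _ia_i\in\escript (H)$. Since $\sum _i\lambda _i=1$ and each $a_i\le I$, we have $\sum _ib_i=\sum _i\lambda _ia_i\le I$, so Corollary~\ref{cor26} applies to the family $\{b_i\}$ and relates $S_{\sum _ib_i}(\rho )$ to $\sum _iS_{b_i}(\rho )$ in the same direction as in that corollary, with equality if and only if $\rmtr (b_j)\rmtr (\rho b_i)=\rmtr (b_i)\rmtr (\rho b_j)$ for all $i,j$. Applying part (i) termwise rewrites $\sum _iS_{b_i}(\rho )$ as $\sum _i\lambda _iS_{a_i}(\rho )$, and the equality condition collapses to $\rmtr (a_j)\rmtr (\rho a_i)=\rmtr (a_i)\rmtr (\rho a_j)$ once we divide out the strictly positive factor $\lambda _i\lambda _j$.

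No step is a real obstacle: part (i) is just scalar homogeneity of the ratio inside the logarithm, and part (ii) is a mechanical translation of Corollary~\ref{cor26} through this homogeneity. The only points requiring attention are the strict positivity $\lambda _i>0$, which makes the cancellation in the equality condition legitimate, and the implicit assumption that each $a_i$ (hence each $\lambda _ia_i$) is nonzero, so that $S_{\lambda _ia_i}(\rho )$ is actually defined by \eqref{eq21}.
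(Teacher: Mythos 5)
Your proof is correct and follows essentially the same route as the paper: part (i) by direct substitution into \eqref{eq21}, and part (ii) by applying Corollary~\ref{cor26} to the rescaled effects $\lambda _ia_i$ and then converting $\sum _iS_{\lambda _ia_i}(\rho )$ into $\sum _i\lambda _iS_{a_i}(\rho )$ via (i), with the equality condition carried along (your explicit cancellation of the factor $\lambda _i\lambda _j$ is a detail the paper leaves implicit). Note that, exactly like the paper's own proof, your argument yields $S_{\sum\lambda _ia_i}(\rho )\ge\sum\lambda _iS_{a_i}(\rho )$ --- the direction inherited from Corollary~\ref{cor26} --- so the ``$\le$'' in the statement of (ii) is evidently a typo that neither you nor the paper actually proves.
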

\begin{proof}
(i)\enspace We have that
\begin{equation*}
S_{\lambda a}(\rho )=-\rmtr (\rho\lambda a)\ln\sqbrac{\tfrac{\rmtr (\rho\lambda a)}{\rmtr (\lambda a)}}
   =-\rmtr (\rho a)\ln\sqbrac{\tfrac{\lambda\rmtr (\rho a)}{\lambda\rmtr (a)}}=\lambda S_a(\rho )
\end{equation*}
(ii)\enspace Applying (i) and Corollary~\ref{cor26} gives
\begin{equation*}
S_{\sum\lambda _ia_i}(\rho )\ge\sum S_{\lambda _ia_i}(\rho )=\sum\lambda _iS_{a_i}(\rho )
\end{equation*}
together with the equality condition.
\end{proof}

As with $\escript (H)$, $\sscript (H)$ is a convex set and we have the following.

\begin{thm}    % Theorem 2.8
\label{thm28}
If $0<\lambda _i\le 1$ $\rho _i\in\sscript (H)$, $i=1,2,\ldots ,m$, with $\sum\limits _{i=1}^m\lambda _i=1$, then
\begin{equation*}
S_a\paren{\sum\lambda _i\rho _i}\ge\sum\lambda _iS_a(\rho _i)
\end{equation*}
for all $a\in\escript (H)$. We have equality if and only if $\rmtr (\rho _ia)=\rmtr (\rho _ja)$ for all $i,j=1,2,\ldots ,m$.
\end{thm}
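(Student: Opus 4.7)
The plan is to exploit the observation that, for a fixed effect $a$, the functional $\rho\mapsto S_a(\rho)$ depends on the state only through the scalar $t(\rho)=\rmtr(\rho a)$, via the map $\phi(t)=-t\ln(t/T)$, where $T=\rmtr(a)$. Writing $\phi(x)=-x\ln x+x\ln T$, I observe that $-x\ln x$ is strictly concave on $(0,\infty)$ and $x\ln T$ is affine, so $\phi$ is strictly concave on $(0,\infty)$.

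Setting $t_i=\rmtr(\rho_i a)$ and using linearity of the trace, $\rmtr\paren{\paren{\sum_i\lambda_i\rho_i}a}=\sum_i\lambda_i t_i$, so the two sides of the claimed inequality become
\[
S_a\paren{\sum_i\lambda_i\rho_i}=\phi\paren{\sum_i\lambda_i t_i},\qquad \sum_i\lambda_i S_a(\rho_i)=\sum_i\lambda_i\phi(t_i).
\]
Applying Jensen's inequality in the form stated in the excerpt to the strictly convex function $-\phi$ immediately yields $\phi\paren{\sum_i\lambda_i t_i}\ge\sum_i\lambda_i\phi(t_i)$, which is the desired inequality. The structure of the argument parallels Theorem~\ref{thm22}, with the convex combination now occurring in the state rather than in the effect.

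For the equality clause, the version of Jensen cited in the excerpt gives equality precisely when $t_1=t_2=\cdots=t_m$, since $\phi$ is strictly concave and every weight $\lambda_i>0$; this translates to $\rmtr(\rho_i a)=\rmtr(\rho_j a)$ for all $i,j=1,2,\ldots,m$. I do not anticipate any real obstacle; the one fine point worth stating explicitly is that each $t_i>0$ is already implicit in $S_a(\rho_i)$ being defined by \eqref{eq21}, which is what legitimizes the use of strict Jensen on the open interval $(0,\infty)$.
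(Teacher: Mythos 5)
Your proof is correct and is essentially the paper's own argument: the paper sets $x_i=\rmtr(\rho_i a)/\rmtr(a)$ and applies Jensen to the strictly concave function $-x\ln x$, which is the same computation as yours after the rescaling $\phi(t)=\rmtr(a)\cdot\bigl[-(t/\rmtr(a))\ln(t/\rmtr(a))\bigr]$, and the equality condition is extracted identically. Your explicit remark that each $t_i>0$ is needed for strict Jensen is a fine point the paper leaves implicit.
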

\begin{proof}
Letting $x_i=\rmtr (\rho _ia)/\rmtr (a)$, since $-x\ln x$ is concave, we obtain
\begin{align*}
S_a\paren{\sum\lambda _i\rho _i}&=-\rmtr\paren{\sum\lambda _i\rho _ia}\ln\sqbrac{\tfrac{\rmtr\paren{\sum\lambda _i\rho _ia}}{\rmtr (a)}}\\
   &=-\rmtr (a)\sum\lambda _i\tfrac{\rmtr (\rho _ia)}{\rmtr (a)}\ln\sqbrac{\tfrac{\sum\lambda _i\rmtr (\rho _ia)}{\rmtr (a)}}\\
   &=\rmtr (a)\sqbrac{-\sum\lambda _ix_i\ln\paren{\sum\lambda _jx_j}}\ge -\rmtr (a)\sum\lambda _ix_i\ln (x_i)\\
   &=-\rmtr (a)\sum\lambda _i\tfrac{\rmtr (\rho _ia)}{\rmtr (a)}\ln\sqbrac{\tfrac{\rmtr (\rho _ia)}{\rmtr (a)}}
   =-\sum\lambda _i\rmtr (\rho _ia)\ln\sqbrac{\tfrac{\rmtr (\rho _ia)}{\rmtr (a)}}\\
   &=\sum\lambda _iS_a(\rho _i)
\end{align*}

We have equality if and only if $x_i=x_j$ which is equivalent to $\rmtr (\rho _ia)=\rmtr (\rho _ja)$ for all $i,j=1,2,\ldots ,m$.
\end{proof}

\begin{thm}    % Theorem 2.9
\label{thm29}
If $a_i\in\escript (H_i)$, $\rho _i\in\sscript (H_i)$, $i=1,2$, then
\begin{equation*}
S_{a_1\otimes a_2}(\rho _1\otimes\rho _2)=\rmtr (\rho _2a_2)S_{a_1}(\rho _1)+\rmtr (\rho _1a_1)S_{a_2}(\rho _2)
   \le S_{a_1}(\rho _1)+S_{a_2}(\rho _2)
\end{equation*}
\end{thm}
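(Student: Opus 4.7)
The proof is essentially a direct computation exploiting the multiplicativity of trace on tensor products, so the plan is to expand the definition of $S_{a_1\otimes a_2}(\rho_1\otimes\rho_2)$ and extract the two terms.

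First, I would apply the standard identities $\rmtr\bigl[(\rho_1\otimes\rho_2)(a_1\otimes a_2)\bigr]=\rmtr(\rho_1 a_1)\rmtr(\rho_2 a_2)$ and $\rmtr(a_1\otimes a_2)=\rmtr(a_1)\rmtr(a_2)$, so that the ratio inside the logarithm in \eqref{eq21} becomes
\begin{equation*}
\tfrac{\rmtr(\rho_1 a_1)\rmtr(\rho_2 a_2)}{\rmtr(a_1)\rmtr(a_2)}.
\end{equation*}
Then, using $\ln(xy)=\ln x+\ln y$, the logarithm splits into $\ln[\rmtr(\rho_1 a_1)/\rmtr(a_1)]+\ln[\rmtr(\rho_2 a_2)/\rmtr(a_2)]$. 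Multiplying through by $-\rmtr(\rho_1 a_1)\rmtr(\rho_2 a_2)$ and regrouping produces exactly
\begin{equation*}
\rmtr(\rho_2 a_2)\,S_{a_1}(\rho_1)+\rmtr(\rho_1 a_1)\,S_{a_2}(\rho_2),
\end{equation*}
which is the claimed equality.

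For the inequality, I would note two simple facts already used in the paper. First, $\rmtr(\rho_i a_i)=\rmtr(a_i^{1/2}\rho_i a_i^{1/2})\le\rmtr(\rho_i)=1$ since $a_i\le I$, so both coefficients $\rmtr(\rho_2 a_2)$ and $\rmtr(\rho_1 a_1)$ lie in $[0,1]$. Second, since $\rho_i\le I$ one has $\rmtr(\rho_i a_i)\le\rmtr(a_i)$, whence $\ln[\rmtr(\rho_i a_i)/\rmtr(a_i)]\le 0$ and therefore $S_{a_i}(\rho_i)\ge 0$. Bounding each coefficient by $1$ in the equality established above then yields $\rmtr(\rho_2 a_2)S_{a_1}(\rho_1)+\rmtr(\rho_1 a_1)S_{a_2}(\rho_2)\le S_{a_1}(\rho_1)+S_{a_2}(\rho_2)$.

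There is no real obstacle here; the only thing to be careful about is the logarithm split, which requires $\rmtr(\rho_i a_i)>0$ for the quantities $S_{a_i}(\rho_i)$ to be defined in the sense of \eqref{eq21} (otherwise both sides are interpreted as zero by the convention $0\ln 0=0$, and the identity still holds trivially).
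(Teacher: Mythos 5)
Your proposal is correct and follows essentially the same route as the paper: expand the trace over the tensor product, split the logarithm, and regroup to obtain the stated identity, from which the inequality follows since $\rmtr (\rho _ia_i)\le 1$ and $S_{a_i}(\rho _i)\ge 0$. The paper leaves the final inequality step implicit, so your explicit justification of the nonnegativity of $S_{a_i}(\rho _i)$ and the bound on the coefficients is a harmless (indeed welcome) elaboration rather than a deviation.
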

\begin{proof}
This follows from
\begin{align*}
S_{a_1\otimes a_2}(\rho _1\otimes\rho _2)
  &=-\rmtr (\rho _1\otimes\rho _2a_1\otimes a_2)\ln\sqbrac{\tfrac{\rmtr (\rho _1\otimes\rho _2a_1\otimes a_2)}{\rmtr (a_1\otimes a_2)}}\\
  &=-\rmtr (\rho _1a_1)\rmtr (\rho _2a_2)\ln\sqbrac{\tfrac{\rmtr (\rho _1a_1)\rmtr (\rho _2a_2)}{\rmtr (a_1)\rmtr (a_2)}}\\
  &=-\rmtr (\rho _1a_1)\rmtr (\rho _2a_2)
  \brac{\ln\sqbrac{\tfrac{\rmtr (\rho _1a_1)}{\rmtr (a_1)}}+\ln\sqbrac{\tfrac{\rmtr (\rho _2a_2)}{\rmtr (a_2)}}}\\
  &=\rmtr (\rho _2a_2)S_{a_1}(\rho _1)+\rmtr (\rho _1a_1)S_{a_2}(\rho _2)\le S_{a_1}(\rho _1)+S_{a_2}(\rho _2)\qedhere
\end{align*}
\end{proof}

An \textit{operation} on $H$ is a completely positive linear map $\iscript\colon\lscript (H)\to\lscript (H)$ such that $\rmtr\sqbrac{\iscript (A)}\le\rmtr (A)$ for all $A\in\lscript (H)$ \cite{blm96,hz12,kra83,lin75,nc00}. If $\iscript$ is an operation we define the \textit{dual} of $\iscript$ to be the unique linear map $\iscript ^*\colon\lscript (H)\to\lscript (H)$ that satisfies $\rmtr\sqbrac{\iscript (A)B}=\rmtr\sqbrac{A\iscript ^*(B)}$ for all
$A,B\in\lscript (H)$. If $a\in\escript (H)$ then for any $\rho\in\sscript (H)$ we have $0\le\rmtr\sqbrac{\iscript (\rho )a}\le 1$ and it follows that
$\iscript ^*(a)\in\escript (H)$. We say that $\iscript$ \textit{measures} $a\in\escript (H)$ if $\rmtr\sqbrac{\iscript (\rho )}=\rmtr (\rho a)$ for all
$\rho\in\sscript (H)$. If $\iscript$ measures $a$ we define the $\iscript$-\textit{sequential product} $a\circ b=\iscript ^*(b)$ for all
$b\in\escript (H)$ \cite{gud20,gud21}. Although $a\circ b$ depends on the operation used to measure $a$ we do not include $\iscript$ in the notation for simplicity. We interpret $a\circ b$ as the effect that results from first measuring $a$ using $\iscript$ and then measuring $b$.

\begin{thm}    % Theorem 2.10
\label{thm210}
{\rm{(i)}}\enspace If $b\perp c$, then $a\circ (b+c)=a\circ b+a\circ c$.
{\rm{(ii)}}\enspace $a\circ I=a$.
{\rm{(iii)}}\enspace $a\circ b\le a$ for all $b\in\escript (H)$.
{\rm{(iv)}}\enspace $S_{a\circ b}(\rho )\le S_a(\rho )$ for all $\rho\in\sscript (H)$.
\end{thm}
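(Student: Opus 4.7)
The plan is to unpack the definition $a\circ b=\iscript^*(b)$ and extract one key identity, namely $\iscript^*(I)=a$, from the hypothesis that $\iscript$ measures $a$. Indeed, for any $\rho\in\sscript(H)$ we have $\rmtr[\iscript(\rho)]=\rmtr[\iscript(\rho)I]=\rmtr[\rho\,\iscript^*(I)]$, and since this equals $\rmtr(\rho a)$ for every state $\rho$, it forces $\iscript^*(I)=a$. With this in hand, (i)--(iv) all fall out quickly.

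For (i), linearity of $\iscript^*$ gives $a\circ(b+c)=\iscript^*(b+c)=\iscript^*(b)+\iscript^*(c)=a\circ b+a\circ c$; note $b+c\in\escript(H)$ because $b\perp c$, so the left side is defined. For (ii), apply the identity $\iscript^*(I)=a$ directly: $a\circ I=\iscript^*(I)=a$. For (iii), I would invoke positivity of $\iscript^*$: since $\iscript$ is completely positive its dual $\iscript^*$ is likewise positive, so from $I-b\ge 0$ we get $\iscript^*(I)-\iscript^*(b)\ge 0$, i.e.\ $a\circ b=\iscript^*(b)\le\iscript^*(I)=a$.

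Finally, for (iv) I would simply combine (iii) with Corollary~\ref{cor25}: the relation $a\circ b\le a$ between effects implies $S_{a\circ b}(\rho)\le S_a(\rho)$ for all $\rho\in\sscript(H)$.

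There is no real obstacle here; the only subtle point is making sure the hypothesis ``$\iscript$ measures $a$'' is correctly converted into the operator identity $\iscript^*(I)=a$ via the defining property of the dual. Once that is noted, the four parts are essentially one-line consequences of linearity, positivity of $\iscript^*$, and the monotonicity result already established in Corollary~\ref{cor25}.
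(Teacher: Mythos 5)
Your proof is correct and follows essentially the same route as the paper: the identity $\iscript^*(I)=a$ extracted from ``$\iscript$ measures $a$'' is exactly the content of part (ii) there, and (iv) is obtained identically via Corollary~\ref{cor25}. The only cosmetic differences are that you prove (i) by direct linearity of $\iscript^*$ rather than by testing traces against all states, and you get (iii) by applying positivity of $\iscript^*$ to $I-b$ instead of the paper's decomposition $a\circ b+a\circ b'=a\circ I=a$; both variants rest on the same fact that $\iscript^*$ maps positive operators to positive operators.
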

\begin{proof}
(i)\enspace For every $\rho\in\sscript (H)$ we obtain
\begin{align*}
\rmtr\sqbrac{\rho\,a\circ (b+c)}&=\rmtr\sqbrac{\rho\iscript ^*(b+c)}=\rmtr\sqbrac{\iscript (\rho )(b+c)}
  =\rmtr\sqbrac{\iscript (\rho )b}+\rmtr\sqbrac{\iscript (\rho )c}\\
  &=\rmtr\sqbrac{\rho\iscript ^*(b)}+\rmtr\sqbrac{\rho\iscript ^*(c)}=\rmtr\sqbrac{\rho\,a\circ b}+\rmtr\sqbrac{\rho\,a\circ c}\\
  &=\rmtr\sqbrac{\rho (a\circ b+a\circ c)}
\end{align*}
Hence, $a\circ (b+c)=a\circ b+a\circ c$.
(ii)\enspace For all $\rho\in\sscript (H)$ we have
\begin{equation*}
\rmtr (\rho\,a\circ I)=\rmtr\sqbrac{\rho\iscript ^* (I)}=\rmtr\sqbrac{\iscript (\rho )I}=\rmtr\sqbrac{\iscript (\rho )}=\rmtr (\rho a)
\end{equation*}
Hence, $a\circ I=a$.
(iii)\enspace By (i) and (ii) we have
\begin{equation*}
a\circ b+a\circ b'=a\circ (b+b')=a\circ I=a
\end{equation*}
It follows that $a\circ b\le a$.
(iv)\enspace Since $a\circ b\le a$, by Corollary~\ref{cor25} we obtain $S_{a\circ b}(\rho )\le S_a(\rho )$ for all $\rho\in\sscript (H)$.
\end{proof}

Theorem~\ref{thm210}(iv) shows that $a\circ b$ gives more information than $a$ about $\rho$. We can continue this process and make more measurements as follows. If $\iscript ^i$ measures $a^i$, $i=1,2,\ldots ,m$, we have
\begin{equation*}
a^1\circ a^2\circ\cdots\circ a^m=(\iscript ^1)^*(\iscript ^2)^*\cdots (\iscript ^{m-1})^*(a^m)
\end{equation*}
and it follows from Theorem~\ref{thm210}(iv) that
\begin{equation*}
S_{a^1\circ a^2\circ\cdots\circ a^m}(\rho )\le S_{a^1\circ a^2\circ\cdots\circ a^{m-1}}(\rho )
\end{equation*}
Notice that the probability of occurrence of the effect $a^1\circ a^2\circ\cdot\circ a^m$ in state $\rho$ is
\begin{align*}
\rmtr (\rho\,a^1\circ a^2\circ\cdots\circ a^m)&=\rmtr\sqbrac{\rho (\iscript ^1)^*(\iscript ^2)^*\cdots (\iscript ^{m-1})^*(a^m)}\\
   &=\rmtr\sqbrac{\iscript ^{m-1}\iscript ^{m-2}\cdots\iscript ^1(\rho )a^m}
\end{align*}
Thus, we begin with the input state $\rho$, then measure $a^1$ using $\iscript ^1$, then measure $a^2$ using $\iscript ^2,\ldots$ and finally measuring $a^m$.

\begin{exam}{1}  % Example 1
For $a\in\escript (H)$ we define the \textit{L\"uders operation} $\lscript ^a(A)=a^{1/2}Aa^{1/2}$ \cite{lud51}. Since
\begin{equation*}
\rmtr\sqbrac{A(\lscript ^a)^*(B)}=\sqbrac{\lscript ^a(A)B}=\rmtr\sqbrac{a^{1/2}Aa^{1/2}B}=\rmtr (Aa^{1/2}Ba^{1/2})
\end{equation*}
we have $(\lscript ^a)^*(B)=a^{1/2}Ba^{1/2}$ so $(\lscript ^a)^*=\lscript ^a$. We have that $\lscript ^a$ measures $a$ because
\begin{equation*}
\rmtr\sqbrac{\lscript ^a(\rho )}=\rmtr (a^{1/2}\rho a^{1/2})=\rmtr (\rho a)
\end{equation*}
for every $\rho\in\sscript (H)$. We conclude that the $\lscript ^a$ sequential product is
\begin{equation*}
a\circ b=(\lscript ^a)^*(b)=a^{1/2}ba^{1/2}
\end{equation*}
We also have that
\begin{align*}
S_{a\circ b}(\rho )&=-\rmtr (\rho\,a\circ b)\ln\sqbrac{\tfrac{\rmtr (\rho\,a\circ b)}{\rmtr (a\circ b)}}
   =-\rmtr (\rho\,a^{1/2}ba^{1/2})\ln\sqbrac{\tfrac{\rmtr (\rho\,a^{1/2}ba^{1/2})}{\rmtr (a^{1/2}ba^{1/2})}}\\
   &=-\rmtr (a\circ\rho\,b)\ln\sqbrac{\tfrac{\rmtr (a\circ\rho\,b)}{\rmtr (ab)}}\hskip 15pc\qedsymbol
\end{align*}
\end{exam}

\begin{exam}{2}  % Example 2
For $a\in\escript (H)$, $\alpha\in\sscript (H)$ we define the \textit{Holevo operation} \cite{hol82} $\hscript ^{(a,\alpha )}(A)=\rmtr (Aa)\alpha$. Since
\begin{align*}
\rmtr\sqbrac{A\paren{\hscript ^{(a,\alpha )}}^*(B)}&=\rmtr\sqbrac{\hscript ^{(a,\alpha )}(A)B}=\rmtr\sqbrac{\rmtr (Aa)\alpha B}
   =\rmtr (Aa)\rmtr (\alpha B)\\
   &=\rmtr\sqbrac{A\rmtr (\alpha B)a}
\end{align*}
we have $\paren{\hscript ^{(a,\alpha )}}^*(B)=\rmtr (\alpha B)a$. We have $\hscript ^{(a,\alpha )}$ measures $a$ because
\begin{equation*}
\rmtr\sqbrac{\hscript ^{(a,\alpha )}(\rho )}=\rmtr (\rho a)
\end{equation*}
for every $\rho\in\sscript (H)$. We conclude that the $\hscript ^{(a,\alpha )}$ sequential product is
\begin{equation*}
a\circ b=\paren{\hscript ^{(a,\alpha )}}^*(b)=\rmtr (\alpha b)a
\end{equation*}
We also have that
\begin{equation*}
S_{a\circ b}(\rho )=-\rmtr (\alpha b)\rmtr (\rho a)\ln\sqbrac{\tfrac{\rmtr (\rho a)}{\rmtr (a)}}=\rmtr (\alpha b)S_a(\rho )
\end{equation*}
If $a_i\in\escript (H)$, $i=1,2,\ldots ,m$, and we measure $a_i$ with operations $\hscript ^{(a_i,\alpha _i)}$, $i=1,2,\ldots ,m-1$, then
\begin{align*}
a_1\circ a_2\circ\cdots\circ a_m&=a_1\circ (a_2\circ\cdots\circ a_m)=\rmtr (\alpha _1a_2\circ\cdots\circ a_m)a_1\\
   &=\rmtr\sqbrac{\alpha _1\rmtr (\alpha _2a_3\circ\cdots\circ a_m)a_2}a_1\\
   &=\rmtr (\alpha _2a_3\circ\cdots\circ a_m)\rmtr (\alpha _1a_2)a_1\\
   &\vdots\\
   &=\rmtr (\alpha _{m-1}a_m)\rmtr (\alpha _{m-2}a_{m-1})\cdots\rmtr (\alpha _1a_2)a_1
\end{align*}
Moreover, it follows from Corollary~\ref{cor27}(i) that
\begin{equation*}
S_{a_1\circ\cdots\circ a_m}(\rho )=\rmtr (\alpha _{m-1}a_m)\rmtr (\alpha _{m-2}a_{m-1})\cdots\rmtr (\alpha _1a_2)S_{a_1}(\rho )
\end{equation*}
for all $\rho\in\sscript (H)$.\hfill\qedsymbol
\end{exam}

\section{Entropy of Observables and Instruments}  % Section 3
We now extend our work on entropy of effects to entropy of observables and instruments. An \textit{observable} on $H$ is a finite collection of effects $A=\brac{A_x\colon x\in\Omega _A}$, $A_x\ne 0$, where $\sum\limits _{x\in\Omega _A}A_x=I$ \cite{blm96,hz12,nc00}. The set
$\Omega _A$ is called the \textit{outcome space} of $A$. The effect $A_x$ occurs when a measurement of $A$ results in the outcome $x$. If
$\rho\in\sscript (H)$, then $\rmtr (\rho A_x)$ is the probability that outcome $x$ results from a measurement of $A$ when the system is in state
$\rho$. If $\Delta\subseteq\Omega _A$, then
\begin{equation*}
\Phi _\rho ^A(\Delta )=\sum _{x\in\Delta}\rmtr (\rho A_x)
\end{equation*}
is the probability that $A$ has an outcome in $\Delta$ when the system is in state $\rho$ and $\Phi _\rho ^A$ is called the
\textit{distribution} of $A$. We also use the notation $A(\Delta )=\sum\brac{A_x\colon x\in\Delta}$ so
$\Phi _\rho ^A(\Delta )=\rmtr\sqbrac{\rho A(\Delta )}$ for all $\Delta\subseteq\Omega _A$. In this way, an observable is a
\textit{positive operation-valued measure} (POVM). We say that an observable $A$ is \textit{sharp} if $A_x$ is a projection on $H$ for all
$x\in\Omega _A$ and $A$ is \textit{atomic} if $A_x$ is a one-dimensional projection for all $x\in\Omega _A$.

If $A$ is an observable and $\rho\in\sscript (H)$ the $\rho$-\textit{entropy} of $A$ is $S_A(\rho )=\sum S_{A_x}$ where the sum is over the
$x\in\Omega _A$ such that $\rmtr (\rho A_x)\ne 0$. Then $S_A(\rho )$ is a measure of the information that a measurement of $A$ gives about
$\rho$. The smaller $S_A(\rho )$ is, the more information given. Notice that if $A$ is sharp, then $\rmtr (A_x)=\dim (A_x)$ and if $A$ is atomic, then
\begin{equation*}
S_A(\rho )=-\sum _x\rmtr (\rho A_x)\ln\sqbrac{\rmtr (\rho A_x)}
\end{equation*}
There are two interesting extremes for $S_A(\rho )$. If $\rho$ has spectral decomposition $\rho =\sum\limits _{i=1}^m\lambda _iP_i$ and $A$ is the observable $A=\brac{P_i\colon i=1,2,\ldots ,m}$, then
\begin{equation*}
S_A(\rho )=-\sum _i\rmtr (\rho P_i)\ln\sqbrac{\rmtr (\rho P_i)}=-\sum\lambda _i\ln (\lambda _i)=S(\rho )
\end{equation*}
As we shall see, this gives the minimum entropy (most information). For the completely random state $I/n$ and any observable $A$ we obtain
\begin{align}                % equation (3.1)
\label{eq31}
S_A(I/n)&=-\sum _x\tfrac{\rmtr (A_x)}{n}\ln\sqbrac{\tfrac{\rmtr (A_x)/n}{\rmtr (A_x)}}=-\tfrac{1}{n}\sum _x\rmtr (A_x)\ln\paren{\tfrac{1}{n}}\notag\\
   &=\tfrac{\ln (n)}{n}\sum _x\rmtr (A_x)=\tfrac{\ln (n)}{n}\,\rmtr (I)=\ln (n)
\end{align}
We shall also see that this gives the maximum entropy (least information).

\begin{thm}    % Theorem 3.1
\label{thm31}
For any observable $A$ and $\rho\in\sscript (H)$ we have
\begin{equation*}
S(\rho )\le S_A(\rho )\le\ln (n)
\end{equation*}
\end{thm}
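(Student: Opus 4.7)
The plan is to derive both bounds directly from results already established in Section~2, so the proof should be short. For the upper bound, I would apply Corollary~\ref{cor26} to the family $\{A_x : x\in\Omega_A\}$. Because $\sum_x A_x = I$, the hypothesis $\sum a_i \le I$ of that corollary is satisfied with equality, and it yields
\begin{equation*}
S_I(\rho)=S_{\sum_x A_x}(\rho)\ge\sum_x S_{A_x}(\rho)=S_A(\rho).
\end{equation*}
A one-line computation from the definition \eqref{eq21} then gives $S_I(\rho)=-\rmtr(\rho I)\ln\sqbrac{\rmtr(\rho I)/\rmtr(I)}=-\ln(1/n)=\ln(n)$, which matches the calculation already made for $S_{\lambda I}(\rho)$ with $\lambda=1$.

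For the lower bound, the idea is to sum the first inequality of Theorem~\ref{thm21} term-by-term in $x$. Writing $\rho=\sum_{i=1}^m\lambda_i P_i$ with multiplicities included, each $P_i$ is a one-dimensional projection and hence $\rmtr(P_i)=1$. Theorem~\ref{thm21} applied to each $A_x$ with $\rmtr(\rho A_x)\ne 0$ gives
\begin{equation*}
S_{A_x}(\rho)\ge -\sum_i\rmtr(P_iA_x)\lambda_i\ln(\lambda_i).
\end{equation*}
Summing over $x$ and swapping the order of summation produces
\begin{equation*}
S_A(\rho)\ge -\sum_i\lambda_i\ln(\lambda_i)\sum_x\rmtr(P_iA_x)=-\sum_i\lambda_i\ln(\lambda_i)\rmtr(P_i)=S(\rho),
\end{equation*}
using $\sum_x A_x=I$ and $\rmtr(P_i)=1$.

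The only real bookkeeping point, and the closest thing to an obstacle, is that $S_A(\rho)$ is defined as a sum only over $x$ with $\rmtr(\rho A_x)\ne 0$, while the manipulation above wants to range over all $x$. This is harmless: if $\rmtr(\rho A_x)=\sum_i\lambda_i\rmtr(P_iA_x)=0$ with every $\lambda_i>0$ and every $\rmtr(P_iA_x)\ge 0$, then $\rmtr(P_iA_x)=0$ for all $i$, so the omitted indices contribute nothing to the inner sum $\sum_x\rmtr(P_iA_x)$. Once this is noted, the chain of equalities closes, and no further ingredient beyond Theorem~\ref{thm21} and Corollary~\ref{cor26} is needed.
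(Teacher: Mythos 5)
Your proof is correct. The lower bound is argued exactly as in the paper: apply the first inequality of Theorem~\ref{thm21} to each $A_x$, sum over $x$, interchange the sums, and use $\sum_x A_x=I$ together with $\rmtr(P_i)=1$. Your remark about the indices with $\rmtr(\rho A_x)=0$ is a point the paper silently skips, and your resolution (all $\lambda_i>0$ and $\rmtr(P_iA_x)\ge 0$ force $\rmtr(P_iA_x)=0$ for every $i$) is exactly right. For the upper bound you take a genuinely different route: the paper applies Jensen's inequality for the concave function $\ln$ directly, with weights $\rmtr(\rho A_x)$, to the sum $\sum_x\rmtr(\rho A_x)\ln\sqbrac{\rmtr(A_x)/\rmtr(\rho A_x)}$, whereas you invoke the superadditivity result Corollary~\ref{cor26} applied to $\{A_x\}$ and the computation $S_I(\rho)=\ln(n)$. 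Your version is the natural generalization of the paper's own Corollary~\ref{cor23} from $\{a,a'\}$ to an arbitrary finite decomposition of $I$, and it has the advantage of reusing machinery already proved; it also hands you the equality condition $\rmtr(A_x)\rmtr(\rho A_y)=\rmtr(A_y)\rmtr(\rho A_x)$ of Corollary~\ref{cor32}(i) for free, since that is precisely the equality condition in Corollary~\ref{cor26}. The paper's direct Jensen argument is self-contained and yields the same equality condition with one application of concavity rather than an induction. Both are valid; the underlying convexity input is the same, just packaged differently.
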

\begin{proof}
Applying Theorem~\ref{thm21} we obtain
\begin{align*}
S_A(\rho )&=\sum _{x\in\Omega _A}S_{A_x}(\rho )\ge -\sum _{x\in\Omega _A}\sum _i\rmtr (P_iA_x)\lambda _i\ln (\lambda _i)\\
   &=-\sum _i\rmtr\paren{P_i\sum _{x\in\Omega _A}A_x}\lambda _i\ln (\lambda _i)\\
   &=-\sum _i\rmtr (P_i)\lambda _i\ln (\lambda _i)=-\sum _i\lambda _i\ln (\lambda _i)=S(\rho )
\end{align*}
Since $\ln (x)$ is concave and $\rmtr (\rho A_x)>0$, $\sum _x\rmtr (\rho A_x)=1$ we have by Jensen's inequality
\begin{align*}
S_A(\rho )&=\sum _x\rmtr (\rho A_x)\ln\sqbrac{\tfrac{\rmtr (A_x)}{\rmtr (\rho A_x)}}
   \le\ln\sqbrac{\sum _x\rmtr (\rho A_x)\tfrac{\rmtr (A_x)}{\rmtr (\rho A_x)}}\\
   &=\ln\sqbrac{\sum _x\rmtr (A_x)}=\ln\sqbrac{\rmtr (I)}=\ln (n)\qedhere
\end{align*}
\end{proof}

An observable $A$ is \textit{trivial} if $A_x=\lambda _xI$, $0<\lambda _x\le 1$, $\sum\lambda _x=1$.

\begin{cor}    % Corollary 3.2
\label{cor32}
{\rm{(i)}}\enspace $S_A(\rho )=\ln (n)$ if and only if $\rmtr (A_x)\rmtr (\rho A_y)=\rmtr (A_y)\rmtr (\rho A_x)$ for all $x,y\in\Omega _A$.
{\rm{(ii)}}\enspace $A$ is trivial if and only if $S_A(\rho )=\ln (n)$ for all $\rho\in\sscript (H)$.
{\rm{(iii)}}\enspace $\rho =I/n$ if and only if $S_A(\rho )=\ln (n)$ for all observables $A$.
{\rm{(iv)}}\enspace $S(\rho )=\ln (n)$ if and only if $\rho =I/n$.
\end{cor}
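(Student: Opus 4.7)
The plan is to prove (i) first, as the core equality condition, and then deduce (ii), (iii), and (iv) as short consequences.

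For (i), I would revisit the concluding Jensen step of Theorem~\ref{thm31}. Writing
\begin{equation*}
S_A(\rho)=\sum_x\rmtr(\rho A_x)\ln\sqbrac{\tfrac{\rmtr(A_x)}{\rmtr(\rho A_x)}},
\end{equation*}
where the sum runs over the support $\brac{x:\rmtr(\rho A_x)>0}$ with weights summing to $1$, concavity of $\ln$ and Jensen yield $S_A(\rho)\le\ln\sqbrac{\sum_x\rmtr(A_x)}\le\ln(n)$. Saturating $\ln(n)$ therefore demands two things: that the support be all of $\Omega_A$ (since any missing outcome would remove a strictly positive $\rmtr(A_x)$ from the log-sum), and that all ratios $\rmtr(A_x)/\rmtr(\rho A_x)$ coincide. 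Cross-multiplying the ratio condition is exactly the identity $\rmtr(A_x)\rmtr(\rho A_y)=\rmtr(A_y)\rmtr(\rho A_x)$ on the support, and full support promotes it to all of $\Omega_A$. Conversely, if the identity holds, any $x^*$ with $\rmtr(\rho A_{x^*})>0$ (one exists, since the probabilities sum to $1$) forces every $\rmtr(\rho A_y)=\rmtr(A_y)\rmtr(\rho A_{x^*})/\rmtr(A_{x^*})>0$; then all ratios agree, Jensen saturates, and $S_A(\rho)=\ln(n)$.

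For (ii), the forward direction is a one-line substitution: $A_x=\lambda_xI$ gives $\rmtr(A_x)=n\lambda_x$, $\rmtr(\rho A_x)=\lambda_x$, and $S_A(\rho)=\sum\lambda_x\ln(n)=\ln(n)$ for every $\rho$. For the converse, applying (i) at every $\rho$ yields $\rmtr\sqbrac{\rho\paren{\rmtr(A_x)A_y-\rmtr(A_y)A_x}}=0$ for all states $\rho$; since a self-adjoint operator whose expectation vanishes on every state must be zero, $\rmtr(A_x)A_y=\rmtr(A_y)A_x$. Every $A_y$ is thus a scalar multiple of a fixed $A_{x_0}$, and the constraint $\sum_y A_y=I$ pins the scalar, giving $A_y=(\rmtr(A_y)/n)I$. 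For (iii), the forward direction is \eqref{eq31}; for the converse, fix any orthonormal basis $\brac{e_i}$ of $H$ and take $A$ to be the atomic sharp observable whose effects are the rank-one projections onto the $e_i$, so that $\rmtr(A_i)=1$. Part (i) then forces $\rmtr(\rho A_i)=\rmtr(\rho A_j)$, i.e.\ $\elbows{e_i,\rho e_i}=1/n$ for every $i$, and the basis being arbitrary yields $\rho=I/n$. Finally, (iv) is sandwiched by Theorem~\ref{thm31}: if $S(\rho)=\ln(n)$, then $S(\rho)\le S_A(\rho)\le\ln(n)$ forces $S_A(\rho)=\ln(n)$ for every $A$, so (iii) gives $\rho=I/n$; the reverse direction is the direct calculation $S(I/n)=-\sum(1/n)\ln(1/n)=\ln(n)$.

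The main subtlety I anticipate is the support bookkeeping in (i): because $S_A(\rho)$ is defined as a sum only over outcomes of positive probability, I must verify carefully that saturating $\ln(n)$ really forces every outcome to have positive probability, so that the stated identity is valid for all pairs in $\Omega_A$ and not merely on the support. Once this is handled, (ii)--(iv) are immediate specializations.
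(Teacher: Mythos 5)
Your proposal is correct and follows essentially the same route as the paper: (i) is read off from the equality condition in the Jensen step of Theorem~\ref{thm31}, and (ii)--(iv) are deduced exactly as in the paper (the paper's converse of (iii) uses the atomic spectral observable of $\rho$ rather than an arbitrary orthonormal basis, but this is the same idea). Your careful treatment of the support issue in (i) --- verifying that saturating $\ln (n)$ forces $\rmtr (\rho A_x)>0$ for every $x\in\Omega _A$ because otherwise $\sum\rmtr (A_x)$ taken over the support would fall strictly below $\rmtr (I)=n$ --- is a point the paper glosses over by simply citing the Jensen equality condition, and it is a genuine (if small) improvement.
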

\begin{proof}
(i)\enspace This follows from the proof of Theorem~\ref{thm31} because this is the condition for equality in Jensen's inequality.
(ii)\enspace Suppose $A$ is trivial with $A_x=\lambda _xI$. Then for every $\rho\in\sscript (H)$ we have
\begin{equation*}
S_A(\rho )=-\sum _x\rmtr (\rho\lambda _xI)\ln\sqbrac{\tfrac{\rmtr (\rho\lambda _xI)}{\rmtr (\lambda _xI)}}
   =-\sum _x\lambda _x\ln\paren{\tfrac{\lambda _x}{n\lambda _x}}=\ln (n)\sum _x\lambda _x=\ln (n)
\end{equation*}
Conversely, suppose $S_A(\rho )=\ln (n)$ for all $\rho\in\sscript (H)$. By (i) we have that
$\rmtr (A_x)\rmtr (\rho A_y)=\rmtr (A_y)\rmtr (\rho A_x)$ for all $\rho\in\sscript (H)$. It follows that
\begin{equation*}
\elbows{\phi ,A_y\phi}=\elbows{\phi ,A_x\phi}\tfrac{\rmtr (A_y)}{\rmtr (A_x)}
\end{equation*}
for every $\phi\in H$, $\phi\ne 0$. Hence, $A_y= (\rmtr (A_y))/(\rmtr (A_x))A_x$ so that
\begin{equation*}
I=\sum _yA_y=\sum _y\tfrac{\rmtr (A_y)}{\rmtr (A_x)}\,A_x=\tfrac{n}{\rmtr (A_x)}\,A_x
\end{equation*}
We conclude that $A_x=(\rmtr (A_x))/n\,I$ for all $x\in\Omega _A$ so $A$ is trivial.
(iii)\enspace If $\rho =I/n$, we have shown in \eqref{eq31} that $S_A(\rho )=\ln (n)$ for all observables $A$. Conversely, if $S_A(\rho )=\ln (n)$ for every observable $A$, as before, we have $\rmtr (A_x)\rmtr (\rho A_y)=\rmtr (A_y)\rmtr (\rho A_x)$ for every observable $A$. Letting $A_x$ be the observable given by the spectral decomposition $\rho=\sum\lambda _xA_x$ where $A$ is atomic, we conclude that
$\lambda _x=\lambda _y$ for all $x,y\in\Omega _A$. Hence, $\lambda _x=1/n$ and $\rho =\sum (1/n)A_x=I/n$.
(iv)\enspace If $S(\rho )=\ln (n)$, by Theorem~\ref{thm31}, $S_A(\rho )=\ln (n)$ for every observable $A$. Applying (iii), $\rho =I/n$. Conversely, if $\rho =I/n$, then
\begin{equation*}
S(\rho )=-\sum _{i=1}^n\tfrac{1}{n}\,\ln\paren{\tfrac{1}{n}}=-\ln\paren{\tfrac{1}{n}}=\ln (n)\qedhere
\end{equation*}
\end{proof}

We now extend Corollary~\ref{cor27}(ii) and Theorem~\ref{thm28} to observables. If $A^i=\brac{A_x^i\colon x\in\Omega}$ are observables with the same outcome space $\Omega$, $i=1,2,\ldots ,m$, and $0<\lambda _i\le 1$ with $\sum\limits _{i=1}^m\lambda _i=1$, then the observable $A=\brac{A_x\colon x\in\Omega}$ where $A_x=\sum\limits _{i=1}^m\lambda _iA_x^i$ is called a \textit{convex combination} of the $A^i$
\cite{gud20}.

\begin{thm}    % Theorem 3.3
\label{thm33}
{\rm{(i)}}\enspace If $A$ is a convex combination of $A^i$, $i=1,2,\ldots ,m$, then for all $\rho\in\sscript (H)$ we have
\begin{equation*}
S_A(\rho )\ge\sum _{i=1}^m\lambda _iS_{A^i}(\rho )
\end{equation*}
{\rm{(ii)}}\enspace If $0<\lambda _i\le 1$ with $\sum\limits _{i=1}^m\lambda _i=1$, $\rho _i\in\sscript (H)$, $i=1,2,\ldots ,m$, and $A$ is an observable, then
\begin{equation*}
S_A\paren{\sum _i\lambda _i\rho _i}\ge\sum _i\lambda _iS_A(\rho _i)
\end{equation*}
\end{thm}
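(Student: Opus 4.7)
The plan is to reduce both inequalities to their effect-level counterparts from Section~2 by fixing an outcome $x \in \Omega_A$, applying the appropriate single-effect inequality to $A_x$, and then summing over $x$. Both parts share this structure, and the order of summation may be freely interchanged because all sums are finite.

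For part~(i), I would first observe that for each $x \in \Omega$, the effect $A_x = \sum_{i=1}^m \lambda_i A_x^i$ is a convex combination of effects, hence lies in $\escript(H)$ by convexity; it is nonzero since each $A_x^i$ is nonzero and all $\lambda_i > 0$. The key step is to apply the superadditivity-type inequality $S_{\sum \lambda_i a_i}(\rho) \ge \sum \lambda_i S_{a_i}(\rho)$ established in the proof of Corollary~\ref{cor27}(ii) to each $A_x$, which yields $S_{A_x}(\rho) \ge \sum_{i=1}^m \lambda_i S_{A_x^i}(\rho)$. Summing over $x$ and exchanging the two finite sums then produces
\begin{equation*}
S_A(\rho) = \sum_{x} S_{A_x}(\rho) \ge \sum_{i=1}^m \lambda_i \sum_{x} S_{A_x^i}(\rho) = \sum_{i=1}^m \lambda_i S_{A^i}(\rho),
\end{equation*}
as desired.

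For part~(ii), I would fix $x \in \Omega_A$ and invoke Theorem~\ref{thm28} directly, applied to the single effect $A_x$ and the convex combination of states $\sum_i \lambda_i \rho_i$, obtaining $S_{A_x}\!\left(\sum_i \lambda_i \rho_i\right) \ge \sum_i \lambda_i S_{A_x}(\rho_i)$. Summing over $x \in \Omega_A$ and interchanging summations yields the claim.

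The only technical point to check is that by definition $S_A(\rho)$ restricts the sum to those $x$ with $\rmtr(\rho A_x) \ne 0$, so one should verify that extending the sum to all of $\Omega_A$ is harmless. This is immediate from the convention $0 \ln 0 = 0$: whenever $\rmtr(\rho A_x) = 0$ we have $S_{A_x}(\rho) = 0$, so the extended sum agrees with the defined one and the interchange of finite sums proceeds without incident. I do not anticipate any real obstacle here since all the substantive analytic work---the two applications of Jensen's inequality for the concave function $-x\ln x$---has already been carried out at the effect level in Corollary~\ref{cor27} and Theorem~\ref{thm28}.
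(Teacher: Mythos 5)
Your proposal is correct and follows essentially the same route as the paper: both parts are obtained by applying Corollary~\ref{cor27}(ii) (in its correct direction $S_{\sum\lambda_ia_i}(\rho)\ge\sum\lambda_iS_{a_i}(\rho)$, as established in its proof) and Theorem~\ref{thm28} to each effect $A_x$ separately and then summing over $x$ and interchanging the finite sums. Your added remark about the $\rmtr(\rho A_x)=0$ terms is a harmless extra care point that the paper leaves implicit.
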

\begin{proof}
(i)\enspace Applying Corollary~\ref{cor27}(ii) gives
\begin{align*}
S_A(\rho )&=\sum _xS_{A_x}(\rho )=\sum _xS_{\sum\lambda _iA_x^i}(\rho )\ge\sum _x\sum _i\lambda _iS_{A_x^i}(\rho )\\
   &=\sum _i\lambda _i\sum _xS_{A_x^i}(\rho )=\sum _i\lambda _iS_{A^i}(\rho )
\end{align*}
(ii)\enspace Applying Theorem~\ref{thm28} gives
\begin{align*}
S_A\paren{\sum _i\lambda _i\rho _i}&=\sum _xS_{A_x}\paren{\sum _i\lambda _i\rho _i}\ge\sum _x\sum _i\lambda _iS_{A_x}(\rho _i)\\
   &=\sum _i\lambda _i\sum _xS_{A_x}(\rho _i)=\sum _i\lambda _iS_A(\rho _i)\qedhere
\end{align*}
\end{proof}

We say that an observable $B$ is a \textit{coarse-graining} of an observable $A$ if there exists a surjection $f\colon\Omega _A\to\Omega _B$ such that
\begin{equation*}
B_y=\sum\brac{A_x\colon f(x)=y}=A\sqbrac{f^{-1}(y)}
\end{equation*}
for every $y\in\Omega _B$ \cite{gud20,gud22,hz12}.

\begin{thm}    % Theorem 3.4
\label{thm34}
If $B$ is a coarse-graining of $A$, then $S_B(\rho )\ge S_A(\rho )$ for al $\rho\in\sscript (H)$.
\end{thm}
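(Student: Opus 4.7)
The plan is to reduce the statement to Corollary~\ref{cor26} (the superadditivity of $S_{(\cdot)}(\rho)$ on orthogonal sums of effects) applied fiber-by-fiber through the surjection $f\colon\Omega_A\to\Omega_B$. The intuitive reason the inequality goes this way is that coarse-graining combines distinct outcomes into a single lumped outcome, which should give less information (hence larger $\rho$-entropy) than the finer measurement $A$.

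Concretely, I would proceed as follows. First, for each fixed $y\in\Omega_B$, observe that the effects $\{A_x : f(x)=y\}$ are mutually orthogonal (in the sense $a\perp b$ used in Section~2), because their sum $B_y=A\sqbrac{f^{-1}(y)}$ is itself an effect bounded by $I$. Hence Corollary~\ref{cor26} applies to this fiber and yields
\begin{equation*}
S_{B_y}(\rho)=S_{\sum_{f(x)=y}A_x}(\rho)\ge\sum_{f(x)=y}S_{A_x}(\rho).
\end{equation*}
Next, summing this inequality over $y\in\Omega_B$ and using that $\{f^{-1}(y):y\in\Omega_B\}$ partitions $\Omega_A$, one gets
\begin{equation*}
S_B(\rho)=\sum_{y\in\Omega_B}S_{B_y}(\rho)\ge\sum_{y\in\Omega_B}\sum_{f(x)=y}S_{A_x}(\rho)=\sum_{x\in\Omega_A}S_{A_x}(\rho)=S_A(\rho).
\end{equation*}

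There is essentially no serious obstacle here: the whole content sits in Corollary~\ref{cor26}. The only small bookkeeping point to be careful about is that $S_{A_x}(\rho)$ is defined only when $\rmtr(\rho A_x)\ne 0$, and similarly for $S_{B_y}(\rho)$. However, if $\rmtr(\rho B_y)=0$ then $\rmtr(\rho A_x)=0$ for every $x$ with $f(x)=y$ (since each $A_x\le B_y$ and $\rho\ge 0$), so both sides drop the entire fiber and the inequality is preserved. This convention matches the definition of $S_A(\rho)$ given just before Theorem~\ref{thm31}, so no further care is needed and the argument above is complete.
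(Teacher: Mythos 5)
Your argument is correct. It rests on the same decomposition the paper uses --- writing each $B_y$ as the fiber sum $\sum_{f(x)=y}A_x$ and exploiting concavity of $-t\ln t$ on that fiber --- but you package the concavity step as an application of Corollary~\ref{cor26}, whereas the paper writes out the Jensen computation from scratch with weights $\rmtr (A_x)/\rmtr (B_y)$. Your route is shorter and, as a bonus, the equality condition of Corollary~\ref{cor26} (namely $\rmtr (A_{x_2})\rmtr (\rho A_{x_1})=\rmtr (A_{x_1})\rmtr (\rho A_{x_2})$ within each fiber) immediately yields Corollary~\ref{cor35}, which the paper instead extracts from its explicit Jensen step. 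Your handling of the vanishing-probability terms is also fine; one could add that even when $\rmtr (\rho B_y)\ne 0$ but some $\rmtr (\rho A_x)=0$ in the fiber, the omitted terms contribute $0$ (all $S_{A_x}(\rho )\ge 0$ since $\rmtr (\rho a)\le\rmtr (a)$), so nothing is lost. The only cost of your approach is that it leans on Corollary~\ref{cor26}, which the paper states without a written induction; if one wanted a fully self-contained proof, the paper's direct $m$-term Jensen computation supplies exactly that.
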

\begin{proof}
Let $B_y=A\sqbrac{f^{-1}(y)}$ for all $y\in\Omega _B$ and let $p_y=\rmtr (\rho B_y)$, $p'_x=\rmtr (\rho A_x)$ for all $y\in\Omega _b$,
$x\in\Omega _A$. Then
\begin{equation*}
p_y=\rmtr\paren{\rho\sum _{f(x)=y}A_x}=\sum _{f(x)=y}\rmtr (\rho A_x)=\sum _{f(x)=y}p'_x
\end{equation*}
Let $V_y=\rmtr (B_y)$, $V'_x=\rmtr (A_x)$ so that
\begin{equation*}
V_y=\rmtr\sum\paren{\sum _{f(x)=y}A_x}=\sum _{f(x)=y}\rmtr (A_x)=\sum _{f(x)=y}V'_x
\end{equation*}
Since $-x\ln (x)$ is concave, we conclude that

\begin{align*}
S_B(\rho )&=-\sum _yp _y\ln\paren{\tfrac{p_y}{V_y}}=-\sum _y\sum _{f(x)=y}p'_x\ln\sqbrac{\frac{\sum _{f(x)=y}p'_x}{V_y}}\\
   &=-\sum _yV_y\paren{\sum _{f(x)=y}\tfrac{p'_xV'_x}{V'_xV_y}}\ln\paren{\sum _{f(x)=y}\tfrac{p'_xV'_x}{V'_xV_y}}\\
   &\ge -\sum _yV_y\sum _{f(x)=y}\tfrac{V'_x}{V_y}\sqbrac{\tfrac{p'_x}{V'_x}\,\ln\paren{\tfrac{p'_x}{V'_x}}}
   =-\sum _y\sum _{f(x)=y}p'_x\ln\paren{\tfrac{p'_x}{V'_x}}\\
   &=-\sum _xp'_x\ln\paren{\tfrac{p'_x}{V'_x}}=S_A(\rho )\qedhere
\end{align*}
\end{proof}

The equality condition for Jensen's inequality gives the following.

\begin{cor}    % Corollary 3.5
\label{cor35}
An observable $A$ possesses a coarse-graining $B_y=A\sqbrac{f^{-1}(y)}$ with $S_B(\rho )=S_A(\rho )$ for all $\rho\in\sscript (H)$ if and only if for every $x_1,x_2\in\Omega _A$ with $f(x_1)=f(x_2)$ we have
\begin{equation*}
\rmtr (A_{x_2})\rmtr (\rho A_{x_1})=\rmtr (A_{x_1})\rmtr (\rho A_{x_2})
\end{equation*}
\end{cor}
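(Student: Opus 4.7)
The plan is to extract the equality case directly from the chain of inequalities used in the proof of Theorem~\ref{thm34}. That proof applied Jensen's inequality to the strictly concave function $-x\ln(x)$ separately for each $y\in\Omega _B$: with weights $\mu _x=V'_x/V_y$ (indexed by $x\in f^{-1}(y)$, and summing to $1$) and arguments $t_x=p'_x/V'_x$, the single-$y$ Jensen step reads
\begin{equation*}
-\paren{\sum _{f(x)=y}\mu _xt_x}\ln\paren{\sum _{f(x)=y}\mu _xt_x}\ge -\sum _{f(x)=y}\mu _xt_x\ln (t_x).
\end{equation*}
Multiplying by $V_y>0$ and summing over $y\in\Omega _B$ reproduces $S_B(\rho )\ge S_A(\rho )$. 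So $S_B(\rho )=S_A(\rho )$ holds if and only if equality is attained in each of these $|\Omega _B|$ separate Jensen inequalities (since each summand is nonnegative and $V_y>0$).

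Next I would invoke the equality criterion recorded in the excerpt after Jensen's theorem: for the strictly concave $-x\ln (x)$, equality holds in the single-$y$ step exactly when $t_x$ is independent of $x$ on the fibre $f^{-1}(y)$, i.e.\ $p'_{x_1}/V'_{x_1}=p'_{x_2}/V'_{x_2}$ whenever $f(x_1)=f(x_2)$. Clearing denominators, this is precisely
\begin{equation*}
\rmtr (A_{x_2})\rmtr (\rho A_{x_1})=\rmtr (A_{x_1})\rmtr (\rho A_{x_2}).
\end{equation*}
Conversely, if this proportionality holds for all $x_1,x_2$ with $f(x_1)=f(x_2)$, then $t_x$ is constant on each fibre, every Jensen step collapses to equality, and summing gives $S_B(\rho )=S_A(\rho )$.

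The only minor subtlety is the handling of outcomes $x$ with $p'_x=0$, which are harmlessly absorbed because $0\ln (0)=0$ and such terms contribute nothing to either side of Jensen; on any fibre they force the proportionality condition trivially (both sides vanish). Apart from that bookkeeping, the argument is a pure equality-in-Jensen reading of the proof of Theorem~\ref{thm34}, so no substantive new work is required and there is no real obstacle.
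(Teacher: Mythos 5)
Your proposal is correct and follows exactly the route the paper intends: the paper's entire proof is the remark that ``the equality condition for Jensen's inequality gives the following,'' and you have simply made explicit that the inequality in Theorem~\ref{thm34} is a sum over $y\in\Omega _B$ of fibrewise Jensen steps with positive weights $\rmtr (A_x)/\rmtr (B_y)$, so that equality holds if and only if $\rmtr (\rho A_x)/\rmtr (A_x)$ is constant on each fibre $f^{-1}(y)$, which is the stated condition.
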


A trace preserving operation is called a \textit{channel}. An \textit{instrument} on $H$ is a finite collection of operations
$\iscript =\brac{\iscript _x\colon x\in\Omega}$ such that $\sum _{x\in\Omega _\iscript}\iscript _x$ is a channel \cite{blm96,hz12,nc00}. We call
$\Omega _\iscript$ the \textit{outcome space} for $\iscript$. If $\iscript$ is an instrument, there exists a unique observable $A$ such that
$\rmtr (\rho A_x)=\rmtr\sqbrac{\iscript _x(\rho )}$ for all $x\in\Omega _A=\Omega _\iscript$, $\rho\in\sscript (H)$ and we say that $\iscript$
\textit{measures} $A$. Although an instrument measures a unique observable, an observable is measured by many instruments For example, if $A$ is an observable, the corresponding \textit{\L\"uders instrument} \cite{lud51} is defined by
\begin{equation*}
\lscript _x^A(B)=A_x^{1/2}BA_x^{1/2}
\end{equation*}
for all $B\in\lscript (H)$. Then $\lscript ^A$ is an instrument because
\begin{align*}
\rmtr\sqbrac{\sum _x\lscript _x^A(B)}&=\sum _x\rmtr\sqbrac{\lscript _x^A(B)}=\sum _x\rmtr (A_x^{1/2}BA_x^{1/2})=\sum _x\rmtr (A_xB)\\
   &=\rmtr\paren{\sum _xA_xB}=\rmtr (IB)=\rmtr (B)
\end{align*}
for all $B\in\lscript (H)$. Moreover, $\lscript ^A$ measures $A$ because
\begin{equation*}
\rmtr\sqbrac{\lscript _x^A(\rho )}=\rmtr (A_x^{1/2}\rho A_x^{1/2})=\rmtr (\rho A_x)
\end{equation*}
for all $\rho\in\sscript (H)$. Of course, this is related to Example~1. Corresponding to Example~2, we have a \textit{Holevo instrument}
$\hscript ^{(A,\alpha )}$ where $\alpha _x\in\sscript (H)$, $x\in\Omega _A$ and
\begin{equation*}
\hscript _x^{(A,\alpha )}(B)=\rmtr (BA_x)\alpha _x
\end{equation*}
for all $B\in\lscript (H)$ \cite{hol82}. To show that $\hscript ^{(A,\alpha )}$ is an instrument we have
\begin{align*}
\rmtr\sqbrac{\sum _x\hscript _x^{(A,\alpha )}(B)}&=\sum _x\rmtr\sqbrac{\hscript _x^{(A,\alpha )}(B)}
   =\sum _x\rmtr\sqbrac{\rmtr (BA_x)\alpha _x}\\
   &=\sum _x\rmtr (BA_x)=\rmtr\paren{B\sum _xA_x}=\rmtr (B)
\end{align*}
Moreover, $\hscript ^{(A,\alpha )}$ measures $A$ because
\begin{equation*}
\rmtr\sqbrac{\hscript _x^{A,\alpha}(\rho )}=\rmtr\sqbrac{(\rho A_x)\alpha _x}=\rmtr (\rho A_x)\rmtr (\alpha _x)=\rmtr (\rho A_x)
\end{equation*}

Let $A,B$ be observables and let $\iscript$ be an instrument that measures $A$. We define the $\iscript$-\textit{sequential} product
$A\circ B$ \cite{gud20,gud21} by $\Omega _{A\circ B}=\Omega _A\times\Omega _B$ and
\begin{equation*}
A\circ B_{(x,y)}=\iscript _x^*(B_y)=A_x\circ B_y
\end{equation*}
Defining $f\colon\Omega _{A\circ B}\to\Omega _A$ by $f(x,y)=x$,we obtain
\begin{equation*}
A\circ B\sqbrac{f^{-1}(x)}=\sum _{f(x,y)=x}A_x\circ B_y=\sum _{y\in\Omega _B}\iscript _x^*(B_y)=\iscript _\alpha ^*(I)=A_x
\end{equation*}
We conclude that $A$ is a coarse-graining of $A\circ B$. Applying Theorem~\ref{thm34} we obtain the following.

\begin{cor}    % Corollary 3.6
\label{cor36}
If $A,B$ are observables, the $S_{A\circ B}(\rho )\le S_A(\rho )$ for all $\rho\in\sscript (H)$. Equality $S_{A\circ B}(\rho )=S_A(\rho )$ holds if and only if for every $x\in\Omega _A$, $y_1,y_2\in\Omega _B$ we have
\begin{equation*}
\tfrac{\rmtr (\rho A_x\circ B_{y_1})}{\rmtr (A_x\circ B_{y_1})}\,\ln\sqbrac{\tfrac{\rmtr (\rho A_x\circ B_{y_1})}{\rmtr (A_x\circ B_{y_1})}}
   =\tfrac{\rmtr (\rho A_x\circ B_{y_2})}{\rmtr (A_x\circ B_{y_2})}\,\ln\sqbrac{\tfrac{\rmtr (\rho A_x\circ B_{y_2})}{\rmtr (A_x\circ B_{y_2})}}
\end{equation*}
\end{cor}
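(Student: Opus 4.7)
The plan is to deduce the corollary directly from Theorem~\ref{thm34} and Corollary~\ref{cor35} via the coarse-graining relation between $A$ and $A\circ B$ that was just established. The discussion preceding the corollary shows that the map $f\colon\Omega_{A\circ B}\to\Omega_A$ defined by $f(x,y)=x$ satisfies $A\circ B[f^{-1}(x)]=A_x$, so $A$ is a coarse-graining of $A\circ B$. Applying Theorem~\ref{thm34} with $A\circ B$ playing the role of the fine observable and $A$ of its coarse-graining immediately yields $S_{A\circ B}(\rho)\le S_A(\rho)$ for every $\rho\in\sscript(H)$, which is the inequality half of the statement.

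For the equality condition I would specialize Corollary~\ref{cor35} to the same coarse-graining. Two points $(x_1,y_1),(x_2,y_2)\in\Omega_{A\circ B}$ satisfy $f(x_1,y_1)=f(x_2,y_2)$ precisely when $x_1=x_2=x$, so Corollary~\ref{cor35} asserts that $S_{A\circ B}(\rho)=S_A(\rho)$ iff
\[
\rmtr(A_x\circ B_{y_2})\,\rmtr(\rho\,A_x\circ B_{y_1})=\rmtr(A_x\circ B_{y_1})\,\rmtr(\rho\,A_x\circ B_{y_2})
\]
for every $x\in\Omega_A$ and $y_1,y_2\in\Omega_B$. Dividing through by the positive scalars $\rmtr(A_x\circ B_{y_i})$ rewrites this as the equality of the ratios $\rmtr(\rho\,A_x\circ B_{y_1})/\rmtr(A_x\circ B_{y_1})$ and $\rmtr(\rho\,A_x\circ B_{y_2})/\rmtr(A_x\circ B_{y_2})$; applying the function $t\mapsto t\ln t$ to both sides then produces the identity displayed in the statement.

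The only non-routine point is the converse of that last step. Since $t\mapsto t\ln t$ is not injective on $(0,1]$, attaining a minimum at $t=1/e$, the $t\ln t$ identity written in the corollary is a priori weaker than the ratio equality actually delivered by Corollary~\ref{cor35}, which in turn stems from the strict concavity of $-t\ln t$ in the Jensen step used to prove Theorem~\ref{thm34}. I would therefore read the displayed condition as shorthand for the ratio equality (equivalently, restrict to one monotone branch of $t\ln t$ so that cancellation becomes reversible), at which point the equivalence with the Corollary~\ref{cor35} condition is immediate. Once the coarse-graining observation $A=A\circ B\circ f^{-1}$ is in place, both the inequality and its equality clause reduce directly to Theorem~\ref{thm34} and Corollary~\ref{cor35}.
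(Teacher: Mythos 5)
Your proof follows the paper's route exactly: the paper likewise observes that $f(x,y)=x$ exhibits $A$ as a coarse-graining of $A\circ B$ via $A\circ B\sqbrac{f^{-1}(x)}=A_x$ and then simply invokes Theorem~\ref{thm34} for the inequality and (implicitly) Corollary~\ref{cor35} for the equality clause. Your closing caveat is also well taken: the Jensen argument characterizes equality by the ratio identity $\rmtr (\rho\,A_x\circ B_{y_1})/\rmtr (A_x\circ B_{y_1})=\rmtr (\rho\,A_x\circ B_{y_2})/\rmtr (A_x\circ B_{y_2})$, and since $t\mapsto t\ln t$ is not injective on $(0,1]$ the condition displayed in the corollary is strictly weaker, so the ``if'' direction as literally printed does not follow from Corollary~\ref{cor35}; the equality condition should be read (or corrected) in the ratio form you give.
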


Extending this work to more than two observables, let $\iscript ^1,\iscript ^2,\ldots ,\iscript ^{m-1}$ be instruments that measure the observables $A^1,A^2,\ldots ,A^{m-1}$, respectively. If $A^m$ is another observable, we have that
\begin{equation*}
(A^1\circ A^2\circ\cdots\circ A^m)_{(x_1,x_2,\ldots ,x_m)}=(\iscript _{x_1}^1)^*(\iscript _{x_2}^2)^*
    \cdots (\iscript _{x_{m-1}}^{m-1})^*(A_{x_m}^m)
\end{equation*}
The next result follows from Corollary~\ref{cor36}.

\begin{cor}    % Corollary 3.7
\label{cor37}
If $A^1,A^2,\ldots ,A^m$ are observables, then
\begin{equation*}
S_{A^1\circ A^2\circ\cdots\circ A^m}(\rho )\le S_{A^1\circ A^2\circ\cdots\circ A^{m-1}}(\rho )
\end{equation*}
for all $\rho\in\sscript (H)$.
\end{cor}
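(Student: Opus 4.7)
The plan is to reduce the statement to a single application of Corollary~\ref{cor36} by treating the length-$(m{-}1)$ sequential product as one observable and pairing it with $A^m$. Let $\iscript^1,\ldots,\iscript^{m-1}$ be the instruments used in forming the sequential product, with $\iscript^i$ measuring $A^i$, and set $C=A^1\circ A^2\circ\cdots\circ A^{m-1}$, regarded as an observable with outcome space $\Omega_{A^1}\times\cdots\times\Omega_{A^{m-1}}$ and effects
\begin{equation*}
C_{(x_1,\ldots,x_{m-1})}=(\iscript^1_{x_1})^*(\iscript^2_{x_2})^*\cdots(\iscript^{m-2}_{x_{m-2}})^*(A^{m-1}_{x_{m-1}}).
\end{equation*}
That these effects do sum to $I$ is an iterated application of Theorem~\ref{thm210}(i)--(ii), since each $(\iscript^i_{x})^*$ is linear and $\sum_x A^i_x=I$.

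Next I would identify a single instrument $\jscript$ on $H$ that measures $C$; the natural choice is the composition
\begin{equation*}
\jscript_{(x_1,\ldots,x_{m-1})}=\iscript^{m-1}_{x_{m-1}}\iscript^{m-2}_{x_{m-2}}\cdots\iscript^{1}_{x_1}.
\end{equation*}
Using $(TS)^*=S^*T^*$ one checks directly that $\jscript^*_{(x_1,\ldots,x_{m-1})}=(\iscript^1_{x_1})^*\cdots(\iscript^{m-1}_{x_{m-1}})^*$, so $\rmtr[\jscript_{(x_1,\ldots,x_{m-1})}(\rho)]=\rmtr[\rho\,C_{(x_1,\ldots,x_{m-1})}]$ for all $\rho$, confirming that $\jscript$ measures $C$. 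By the definition of the sequential product of two observables, the $\jscript$-sequential product $C\circ A^m$ has effects
\begin{equation*}
(C\circ A^m)_{((x_1,\ldots,x_{m-1}),x_m)}=\jscript^*_{(x_1,\ldots,x_{m-1})}(A^m_{x_m})=(A^1\circ A^2\circ\cdots\circ A^m)_{(x_1,\ldots,x_m)},
\end{equation*}
so $C\circ A^m$ coincides with $A^1\circ\cdots\circ A^m$ after the obvious identification of outcome spaces.

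Finally, I would apply Corollary~\ref{cor36} to the pair $(C,A^m)$ to obtain
\begin{equation*}
S_{A^1\circ\cdots\circ A^m}(\rho)=S_{C\circ A^m}(\rho)\le S_C(\rho)=S_{A^1\circ\cdots\circ A^{m-1}}(\rho)
\end{equation*}
for every $\rho\in\sscript(H)$, which is the claim. The only delicate point is the bookkeeping in the middle step: checking that the iterated sequential product associates correctly so that $\jscript$ really is an instrument measuring $C$, and that the Heisenberg-dual composition reverses order in exactly the way required to reproduce the defining formula for $A^1\circ\cdots\circ A^m$. Once this is verified, the entropy inequality is immediate from the two-observable case.
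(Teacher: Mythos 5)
Your proof is correct and follows the same route the paper intends: the paper simply asserts that Corollary~\ref{cor36} yields the result, and your argument supplies exactly the bookkeeping that makes this legitimate, namely that $C=A^1\circ\cdots\circ A^{m-1}$ is an observable measured by the composed instrument $\jscript_{(x_1,\ldots,x_{m-1})}=\iscript^{m-1}_{x_{m-1}}\cdots\iscript^{1}_{x_1}$ and that the $\jscript$-sequential product $C\circ A^m$ coincides with $A^1\circ\cdots\circ A^m$ (a fact the paper itself records, in the guise of instrument composition, immediately after the corollary). No gaps; your verification that the Heisenberg duals compose in reverse order is precisely the delicate point, and you handle it correctly.
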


If $\iscript$ is an instrument, let $A$ be the unique observable that $\iscript$ measures so $\rmtr\sqbrac{\iscript _x(\rho )}=\rmtr (\rho A_x)$ for all $x\in\Omega _\iscript$ and $\rho\in\sscript (H)$. We define the $\rho$-\textit{entropy} of $\iscript$ as $S_\iscript (\rho )=S_A(\rho )$. Since
$A_x=\iscript _x^*(I)$ we have
\begin{equation*}
\rmtr (A_x)=\rmtr\sqbrac{\iscript _x^*(I)}=\rmtr\sqbrac{\iscript _x(I)}
\end{equation*}
Hence,
\begin{equation*}
S_\iscript (\rho )=S_A(\rho )=-\sum _x\rmtr (\rho A_x)\ln\sqbrac{\tfrac{\rmtr (\rho A_x)}{\rmtr (A_x)}}
   =-\sum _x\rmtr\sqbrac{\iscript _x(\rho )}\ln\brac{\tfrac{\rmtr\sqbrac{\iscript _x(\rho )}}{\rmtr\sqbrac{\iscript _x(I)}}}
\end{equation*}
Now let $\iscript ^1,\iscript ^2,\ldots ,\iscript ^m$ be instruments and let $A^1,A^2,\ldots ,A^m$ be the unique observables they measure, respectively. Denoting the composition of two instruments $\iscript,\jscript$ by $\iscript\circ\jscript$ we have
\begin{align*}
\rmtr\sqbrac{\iscript _{x_m}^m\circ\iscript _{x_{m-1}}^{m-1}\circ\cdots\circ\iscript _{x_1}^1(\rho )}
   &=\rmtr\sqbrac{\rho (\iscript _{x_1}^1)^*(\iscript _{x_2}^1)^*\cdots (\iscript _{x_m}^m)^*(I)}\\
   &=\rmtr (\rho A_{x_1}^1\circ A_{x_2}^2\circ\cdots\circ A_{x_m}^m)
\end{align*}
Hence, the observable measured by $\iscript ^m\circ\iscript ^{m-1}\circ\cdots\circ\iscript ^1$ is $A^1\circ A^2\circ\cdots\circ A^m$. It follows that
\begin{equation*}
S_{\iscript ^m\circ\iscript ^{m-1}\circ\cdots\circ\iscript ^1}(\rho )=S_{A^1\circ A^2\circ\cdots\circ A^m}(\rho )
\end{equation*}
We conclude that Theorem~1, 2 and 3 \cite{st22} follow from our results. Moreover, our proofs are simpler since they come from the more basic concept of $\rho$-entropy for effects.

Let $A,B$ be observables on $H$ and let $\iscript$ be an instrument that measures $A$. The corresponding sequential product becomes
\begin{equation*}
(A\circ B)_{(x,y)}=\iscript _x^*(B_y)=A_x\circ B_y
\end{equation*}
The $\rho$-entropy of $A\circ B$ has the form
\begin{align*}
S_{A\circ B}(\rho )
   &=-\sum _{x,y}\rmtr\sqbrac{\rho (A\circ B)_{(x,y)}}\ln\brac{\tfrac{\rmtr\sqbrac{\rho (A\circ B)_{(x,y)}}}{\rmtr\sqbrac{(A\circ B)_{(x,y)}}}}\\
   &=-\sum _{x,y}\rmtr\sqbrac{\rho\iscript _x^*(B_y)}\ln\brac{\tfrac{\rmtr\sqbrac{\rho\iscript _x^*(B_y)}}{\rmtr\sqbrac{\iscript _x^*(B_y)}}}\\
   &=-\sum _{x,y}\rmtr\sqbrac{\iscript _x(\rho )B_y}\ln\brac{\tfrac{\sqbrac{\iscript _x(\rho )B_y}}{\rmtr\sqbrac{\iscript _x(I)B_y}}}
\end{align*}
If $\iscript ^A$ is the L\"uders instrument $\iscript _x^A(\rho )=A_x^{1/2}\rho A_x^{1/2}$ we have $(A\circ B)_{(x,y)}=A_x^{1/2}B_yA_x^{1/2}$ and
%%%% NOTICE fraction below is display mode instead of text mode for comparison
\begin{equation*}
S_{A\circ B}(\rho )=-\sum _{x,y}\rmtr (A_x^{1/2}\rho A_x^{1/2}B_y)\ln\sqbrac{\frac{\rmtr (A_x^{1/2}\rho A_x^{1/2}B_y)}{\rmtr (A_xB_y)}}
\end{equation*}
If $\hscript ^{(A,\alpha )}$ is the Holevo instrument $\hscript _x^{(A,\alpha )}(\rho )=\rmtr (\rho A_x)\alpha _x$, $\alpha _x\in\sscript (H)$ we obtain
\begin{align*}
S_{A\circ B}(\rho )
   &=-\sum _{x,y}\rmtr (\rho A_x)\rmtr (\alpha _xB_y)\ln\sqbrac{\tfrac{\rmtr (\rho A_x)\rmtr (\alpha _xB_y)}{\rmtr (A_x)\rmtr (\alpha _xB_y)}}\\
   &=-\sum _{x,y}\rmtr (\rho A_x)\rmtr (\alpha _xB_y)\ln\sqbrac{\tfrac{\rmtr (\rho A_x)}{\rmtr (A_x)}}\\
   &=-\sum _x\rmtr (\rho A_x)\ln\sqbrac{\tfrac{\rmtr (\rho A_x)}{\rmtr (A_x)}}=S_A(\rho )
\end{align*}
This also follows from Corollary~\ref{cor36} because
\begin{equation*}
\tfrac{\rmtr (\rho A_x\circ B_y)}{\rmtr (A_x\circ B_y)}=\tfrac{\rmtr (\alpha _xB_y)\rmtr (\rho A_x)}{\rmtr (\alpha _xB_y)\rmtr (A_x)}
   =\tfrac{(\rho A_x)}{\rmtr (A_x)}
\end{equation*}

If $A$ is an observable on $H$ and $B$ is an observable on $K$ we form the \textit{tensor product observable} $A\otimes B$ on $H\otimes K$ given by $(A\otimes B)_{(x,y)}=A_x\otimes B_y$ where $\Omega _{A\otimes B}=\Omega _A\times\Omega _B$ \cite{gud20}.

\begin{lem}    % Lemma 3.8
\label{lem38}
If $\rho _1\in\sscript (H)$, $\rho _2\in\sscript (K)$, then
\begin{equation*}
S_{A\circ B}(\rho _1\otimes\rho _2)=S_A(\rho _1)+S_B(\rho _2)
\end{equation*}
\end{lem}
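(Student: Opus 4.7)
The plan is to reduce this to Theorem~\ref{thm29} applied effect-by-effect, then exploit the fact that the marginal probabilities sum to one. I interpret the statement as referring to the tensor product observable $A\otimes B$ just defined, so that $(A\otimes B)_{(x,y)} = A_x \otimes B_y$ with outcome space $\Omega_A \times \Omega_B$.

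First I would unfold the definition of $S_{A\otimes B}$ for an observable as a sum of effect entropies:
\begin{equation*}
S_{A\otimes B}(\rho_1 \otimes \rho_2) = \sum_{x,y} S_{A_x \otimes B_y}(\rho_1 \otimes \rho_2).
\end{equation*}
Then I would apply Theorem~\ref{thm29} to each summand (using $a_1 = A_x \in \escript(H)$, $a_2 = B_y \in \escript(K)$, which gives the exact identity rather than the inequality), obtaining
\begin{equation*}
S_{A_x \otimes B_y}(\rho_1 \otimes \rho_2) = \rmtr(\rho_2 B_y)\, S_{A_x}(\rho_1) + \rmtr(\rho_1 A_x)\, S_{B_y}(\rho_2).
\end{equation*}

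The final step is to sum over $(x,y)$ and separate the double sum into two pieces. For the first piece, pulling $S_{A_x}(\rho_1)$ outside the sum over $y$ yields a factor $\sum_y \rmtr(\rho_2 B_y) = \rmtr\bigl(\rho_2 \sum_y B_y\bigr) = \rmtr(\rho_2 I) = 1$, since $B$ is an observable; the remaining sum over $x$ is exactly $S_A(\rho_1)$ by definition. The second piece is symmetric and collapses to $S_B(\rho_2)$ in the same way. Adding the two gives the claimed equality.

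There is no real obstacle here: the work was done in Theorem~\ref{thm29}, and the normalization identities $\sum_x A_x = I_H$ and $\sum_y B_y = I_K$ do the rest. The only subtle point is that Theorem~\ref{thm29} requires the effects to be nonzero and the corresponding probabilities to be positive in order for $S_{A_x}$ and $S_{B_y}$ to be defined; but by convention the sums defining $S_A(\rho_1)$ and $S_B(\rho_2)$ range only over outcomes with nonzero probability, and for any $(x,y)$ with $\rmtr\sqbrac{(\rho_1 \otimes \rho_2)(A_x \otimes B_y)} = \rmtr(\rho_1 A_x)\rmtr(\rho_2 B_y) = 0$, the corresponding term in $S_{A\otimes B}(\rho_1 \otimes \rho_2)$ vanishes, as do the matching terms on the right-hand side, so the accounting is consistent.
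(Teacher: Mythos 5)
Your proof is correct and follows essentially the same route as the paper's: the paper expands $S_{A\otimes B}(\rho_1\otimes\rho_2)$ directly and splits the logarithm of the product of traces, which is precisely the computation already packaged in Theorem~\ref{thm29}, so your citation of that theorem followed by the normalizations $\sum_y\rmtr(\rho_2 B_y)=\sum_x\rmtr(\rho_1 A_x)=1$ reproduces the same argument in slightly more modular form. Your closing remark on the zero-probability terms is a sound piece of bookkeeping that the paper leaves implicit.
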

\begin{proof}
From the definition of $A\otimes B$ we obtain
\begin{align*}
S_{A\otimes B}(\rho _1\otimes\rho _2)&=-\sum _{x,y}\rmtr (\rho _1\otimes\rho _2A_x\otimes B_y)\ln
   \sqbrac{\tfrac{\rmtr (\rho _1\otimes\rho _2A_x\otimes B_y)}{\rmtr (A_x\otimes B_y)}}\\
   &=-\sum _{x,y}\rmtr (\rho _1A_x)\rmtr (\rho _2B_y)\ln\sqbrac{\tfrac{\rmtr (\rho _1A_x)\rmtr (\rho _2B_y)}{\rmtr (A_x)\rmtr (B_y)}}\\
   &=-\sum _{x,y}\rmtr (\rho _1A_x)\rmtr (\rho _2B_y)\ln\sqbrac{\tfrac{\rmtr (\rho _1A_x)}{\rmtr (A_x)}}\\
   &\quad -\sum _{x,y}\rmtr (\rho _1A_x)\rmtr (\rho _2B_y)\ln\sqbrac{\tfrac{\rmtr (\rho _2B_y)}{\rmtr (B_y)}}\\
   &=-\sum _x\rmtr (\rho _1A_x)\ln\sqbrac{\tfrac{\rmtr (\rho _1A_x)}{\rmtr (A_x)}}
   -\sum _y\rmtr (\rho _2B_y)\ln\sqbrac{\tfrac{\rmtr (\rho _2B_y)}{\rmtr (B_y)}}\\
   &=S_A(\rho _1)+S_B(\rho _2)\qedhere
\end{align*}
\end{proof}
We conclude that $A$ gives more information about $\rho _1$ than $A$ and $B$ give about $\rho _1\otimes\rho _2$ and similarly for $B$.

A \textit{measurement model} \cite{blm96,hz12,nc00} is a 5-tuple $\mscript =(H,K,\nu ,\sigma ,P)$ where $H$ is the \textit{system} Hilbert space, $K$ is the \textit{probe} Hilbert space, $\nu$ is the \textit{interaction} channel, $\sigma\in\sscript (K)$ is the initial \textit{probe state} and $P$ is the \textit{probe observable} on $K$. We interpret $\mscript$ as an apparatus that is employed to measure an instrument and hence an observable. In fact, $\mscript$ measures the unique instrument $\iscript$ on $H$ given by
\begin{equation*}
\iscript _x(\rho )=\rmtr _K\sqbrac{\nu (\rho\otimes\sigma )(I\otimes P_x)}
\end{equation*}
In this way, a state $\rho\in\sscript (H)$ is input into the apparatus and combined with the initial state $\sigma$ of the probe system. The channel $\nu$ interacts the two states and a measurement of the probe $P$ is performed resulting in outcome $x$. The outcome state is reduced to $H$ by applying the partial trace over $K$. Now $\iscript$ measures an unique observable $A$ on $H$ that satisfies
\begin{equation}                % equation (3.2)
\label{eq32}
\rmtr (\rho A_x)=\rmtr\sqbrac{\iscript _x(\rho )}=\rmtr\sqbrac{\nu (\rho\otimes\sigma )(I\otimes P_x)}
\end{equation}
The $\rho$-entropy of $\iscript$ becomes
\begin{equation*}
S_\iscript (\rho )=S_A(\rho )=-\sum _x\rmtr (\rho A_x)\ln\sqbrac{\tfrac{\rmtr (\rho A_x)}{\rmtr (A_x)}}
\end{equation*}
where $\rmtr (\rho A_x)$ is given by \eqref{eq32}. Of course, $S_\iscript (\rho )=S_A(\rho )$ gives the amount of information that a measurement by $\mscript$ provides about $\rho$. A closely related concept is the observable $I\otimes P$ and
$S_{I\otimes P}\sqbrac{\nu (\rho\otimes\sigma )}$ also provides the amount of information that a measurement $\mscript$ provides about
$\rho$. It follows from \eqref{eq32} that the distribution of $A$ in the state $\rho$ equals the distribution of $I\otimes P$ in the state
$\nu (\rho\otimes\sigma )$. We now compare $S_A(\rho )$ and $S_{I\otimes P}\sqbrac{\nu (\rho\otimes\sigma )}$. Applying \eqref{eq32} gives
\begin{align*}
S_{I\otimes P}&\sqbrac{\nu (\rho\otimes\sigma )}\\
    &=-\sum _x\rmtr\sqbrac{\nu (\rho\otimes\sigma )(I\otimes P_x)}
    \ln\brac{\tfrac{\rmtr\sqbrac{\nu (\rho\otimes\sigma )(I\otimes P_x)}}{\rmtr (I\otimes P_x)}}\\
    &=-\sum _x\rmtr (\rho A_x)\ln\sqbrac{\tfrac{\rmtr (\rho A_x)}{n\rmtr (P_x)}}
    =-\sum _x\rmtr (\rho A_x)\ln\sqbrac{\tfrac{\rmtr (A_x)}{n\rmtr (P_x)}\,\tfrac{\rmtr (\rho A_x)}{\rmtr (A_x)}}\\
    &=-\sum _x\rmtr (\rho A_x)\ln\sqbrac{\tfrac{\rmtr (\rho A_x)}{\rmtr (A_x)}}-\sum\rmtr (\rho A_x)\ln\sqbrac{\tfrac{\rmtr (A_x)}{n\rmtr (P_x)}}\\
    &=S_A(\rho )-\sum _x\rmtr (\rho A_x)\ln\sqbrac{\tfrac{\rmtr (A_x)}{n\rmtr (P_x)}}
\end{align*}
It follows that $S_A(\rho )\le S_{I\otimes P}\sqbrac{\nu (\rho\otimes\sigma )}$ if and only if
\begin{equation}                % equation (3.3)
\label{eq33}
\sum _x\rmtr (\rho A_x)\ln\sqbrac{\tfrac{\rmtr (A_x)}{n\rmtr (P_x)}}\le 0
\end{equation}
Now \eqref{eq33} may or may not hold depending on $A$, $\rho$ and $P$. In many cases, $P$ is atomic \cite{blm96,hz12} and then
\begin{equation*}
\ln\sqbrac{\tfrac{\rmtr (A_x)}{n\rmtr (P_x)}}=\ln\sqbrac{\tfrac{\rmtr (A_x)}{n}}<0
\end{equation*}
so $S_A(\rho)\le S_{I\otimes P}\sqbrac{\nu (\rho\otimes\sigma )}$ for all $\rho\in\sscript (H)$. Also, \eqref{eq33} holds if $P$ is sharp.

\end{document}